\documentclass[11pt]{article}
\usepackage{times}


%
%
%
\newtheorem{definition}{Definition}
\newtheorem{rem}[definition]{Remark}

\newtheorem{lemma}[definition]{Lemma}
\newtheorem{theorem}[definition]{Theorem}
\newtheorem{proposition}[definition]{Proposition}

\usepackage{verbatim}
\usepackage{graphicx}
\usepackage{amssymb}
\usepackage{amsfonts,amsmath}
\usepackage[numbers]{natbib}
\usepackage{amsopn}
\usepackage{amssymb}
\usepackage{caption}
\usepackage{subcaption}
\usepackage{accents}
\usepackage{epsfig}
\usepackage{epstopdf}
\usepackage{algorithm}
\usepackage{algpseudocode}
\usepackage{algorithmicx}
\usepackage{changepage}
\usepackage{color}

%

\textheight 235mm
\topmargin -15mm
\oddsidemargin 3pt
\evensidemargin 3pt
\textwidth 160mm

\setlength{\bibsep}{0pt}

\def\RR{\hbox{\sf I\kern-.14em\hbox{R}}}

\newcommand{\ind}{\mathrm{IND}}
\newcommand{\weight}{\mathrm{WEIGHT}}
\newcommand{\father}{\mathrm{FATHER}}
\newcommand{\successor}{\mathrm{ORIEN}}

\newcommand{\dfn}{\mathrm{DFN}}
\newcommand{\rooot}{\mathrm{ROOT}}
\newcommand{\mmark}{\mathrm{MARK}}
\newcommand{\last}{\mathrm{Last}}


%

\begin{document}
\global\def\refname{{\small \bf References}}
%
%

\centerline{\Large \bf
       Weighted domination number of cactus graphs
}
\date{}
\vspace{7mm}
\centerline{\small{\bf Tina Novak}}
{\it
\centerline{\small University of Ljubljana, Faculty of Mechanical Engineering}
\centerline{\small A\v sker\v ceva 6, SI-1000 Ljubljana, Slovenia}
\centerline{\small tina.novak@fs.uni-lj.si }
}
\vspace{7mm}
\centerline{\small{\bf Janez \v{Z}erovnik}}
{\it
\centerline{\small University of Ljubljana, Faculty of Mechanical Engineering}
\centerline{\small A\v sker\v ceva 6, SI-1000 Ljubljana, Slovenia}
\centerline{\small janez.zerovnik@fs.uni-lj.si }
}
\vspace{7mm}

%

%
%
%
%

\noindent{\small {\bf Abstract:} In the paper, we write a linear algorithm for calculating the weighted domination number of a vertex-weighted cactus. The algorithm is based on the well known depth first search (DFS) structure. Our algorithm needs less than $12n+5b$ additions and $9n+2b$ $\min$-operations where $n$ is the number of vertices and $b$ is the number of blocks in the cactus.}

\vspace{7mm}

\noindent{\small {\bf Keywords:} weighted domination problem,cactus graph, DFS structure

}

\section{Introduction}

\label{section:intro}

Cactus graphs are interesting generalizations of trees,
with numerous applications, for example  in location theory \cite{Boaz,ZZTOCP},
communication networks \cite{Fink,Blaz}, stability analysis \cite{Arcak}, and elsewhere.
 Usually, linear problems on trees imply linear problems on cacti. In this paper, we study the weighted domination number of a cactus graph with weighted vertices. It is well known that the problem of the weighted domination number on trees is linear \cite{CAAODIG,NWODIWT}. Actually, we also have very general linear algorithm for computing domination-like problems on partial $k$-trees \cite{PAPkT}. Time complexity of this algorithm is ${\cal O}(n|L|^{2k+1})$, where $k$ is the treewidth and $L$ is the set of vertex states (the different ways that a solution to a subproblem impact to the origin vertex). In the case of cactus graphs we have $k=2$ and $|L|=3$. Therefore, the time complexity of the general algorithm  \cite{PAPkT} on cacti is ${\cal O}(3^{5}n)$.

It is well known that cactus graphs can be recognized by running an extended version of  depth first search (DFS) algorithm that results a data structure of a cactus, see for example \cite{ZZCWW}.
From the data structure, the  vertices can be naturally divided  into three types, i.e. each vertex either lies on a cycle and has degree $2$ or lies on a cycle and has degree $\geq 3$ or does not lie on a cycle (see \cite{BKLAPN}).
Using this structure, we design an algorithm for general cacti.
In the paper, we first illustrate the basic idea by writing a version of the algorithm for trees
before  generalizing  the approach to arbitrary  cactus graphs.
%
%
Our   algorithm has time complexity ${\cal O}(28n)$ which substantially improves the constant $3^5 =243$.
In fact, we  will estimate time complexity of our algorithm  more precisely (blocks will be formally  defined later)

\begin{theorem}
   Let $n$ be the number of vertices in a cactus and $ b < n $ be the number of blocks.
   For computing the  weighted domination number we need less than $12n+5b$ additions and $9n+2b$ $\min$-operations.
\end{theorem}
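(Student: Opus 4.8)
The plan is to prove the bound by an explicit, amortized count of the arithmetic performed by the algorithm developed in the sections that follow. That algorithm makes a single bottom-up pass over the DFS data structure of the cactus, maintaining at each vertex $v$ a triple $(\alpha_v,\beta_v,\gamma_v)$ recording the optimal weighted domination cost of the portion of the graph already processed below $v$, under the three relevant states of $v$: $v$ lies in the dominating set; $v$ is outside the set but already dominated from below; or $v$ is outside the set and not yet dominated, so that it must be dominated through its parent block. The theorem then reduces to two bounded-work claims: (i) finalizing one vertex --- reading the triples returned by the blocks rooted at $v$ and writing $v$'s own triple --- costs only a constant number of additions and of $\min$-operations; and (ii) finalizing one block --- condensing the (already computed) triples of its vertices into a single triple handed up to the block's root cut vertex --- likewise costs only a constant of each. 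Multiplying the first constant by $n$, the second by $b$, and keeping the two tallies (additions and $\min$'s) separate because their coefficients differ, yields the claimed $12n+5b$ and $9n+2b$.

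For claim (i) I would simply enumerate the update rules. A vertex $v$ is the DFS-root of at most its collection of child blocks; combining one child block's returned triple into $v$'s accumulating triple uses a fixed small number of additions (to add the child's contribution) and of $\min$'s (to select the cheaper state of the child), and since every block has exactly one DFS-root these combinations sum over the whole graph to at most a constant times $n$, the constant being absorbed into $12n$ and $9n$.

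For claim (ii) the only non-trivial block type is a cycle $C$ with DFS-root $r$: the vertices of $C\setminus\{r\}$ form a path of degree-two cycle vertices, each possibly bearing a hanging bridge-subtree whose triple is already known, and the optimal domination around $C$ is obtained by a constant-width dynamic program marched along that path. The care here is to show that each such path vertex is touched only a constant number of extra times (beyond the cost it already incurs as an ordinary vertex) and that neither the initialization of the program at the far end of $C$ nor its sealing at $r$ depends on the length of $C$; this furnishes the constant-per-block bound behind the $5b$ and $2b$ terms, a bridge being merely the degenerate, cheaper case. Finally $b<n$ is automatic, since removing a leaf block of the block-cut tree deletes at least one vertex and exactly one block, so by induction a connected cactus on $n$ vertices has at most $n-1$ blocks.

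The principal obstacle is exactly this cycle analysis: one must arrange the cycle dynamic program so that (a) a cycle vertex that also carries a hanging subtree is still visited only ${\cal O}(1)$ times, (b) no operation is counted both under the per-vertex and under the per-block budget, and (c) the separate counts of additions and of $\min$-operations each stay strictly below the stated linear functions. Proving the mere existence of such ${\cal O}(1)$ constants is routine; pinning them down tightly enough to reach $12n+5b$ and $9n+2b$ is where the operation-by-operation description in the following sections does the work, and the proof of the theorem amounts to collecting those per-line counts.
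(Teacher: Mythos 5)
Your accounting framework---charge the block-internal dynamic programming per vertex and the merging of a block into its root per block---is essentially the paper's, but the proposal stops exactly where the theorem begins. The statement is not an asymptotic ${\cal O}(n+b)$ claim; it asserts the specific constants $12$, $9$, $5$ and $2$, and you explicitly defer deriving them (``the proof of the theorem amounts to collecting those per-line counts'') to an operation-by-operation description you never give. The paper's proof consists precisely of those counts: the path/tree dynamic program costs $4$ additions and $3$ $\min$-operations per step; a cycle block on $n_j$ vertices is processed by running that program essentially three times (PATH-LIKE once and D-CLOSED PATH-LIKE twice, plus a constant number of closing operations), which gives fewer than $12(n_j-1)$ additions and $9(n_j-1)$ $\min$-operations and is the worst case over block types; sticking a block's parameters into its root hinge via Lemma \ref{lemma:WDNofGRAPH1vertex} costs $5$ additions and $2$ $\min$-operations; and the inequality $n_1+\cdots+n_b-b\le n$ converts the per-block sums into $12n+5b$ and $9n+2b$. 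None of these quantitative facts appears in your argument, so the stated bound is asserted rather than proved.

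There is also a structural slip. Your claim (ii), that finalizing one block costs only a constant number of operations, is false for a cycle block, whose condensation needs work proportional to its length; you partially repair this by re-charging the cycle DP's per-vertex work to the $12n$/$9n$ budget, but the double-charging danger you yourself flag---a cycle vertex that is simultaneously the root of hanging blocks being billed both for absorbing those blocks and for the several cycle passes---is never resolved. In the paper this is resolved by charging each hanging block's merge to that block's own per-block budget (the $5$ and $2$), so that every vertex is charged at most $12$ additions and $9$ $\min$-operations in total; without such an explicit assignment your amortization is not shown to stay within the claimed constants. (The final remark that $b<n$ is unnecessary, since it is a hypothesis of the theorem.)
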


The rest of the paper is organized as follows.
In  the next section we first recall   definitions of the domination number and the weighted domination number of general graphs. For cacti, we introduce the classification of vertices in relation to the skeleton structure \cite{BKLAPN}.
In Section \ref{section:WDP} we define three parameters that are useful when considering  the weighted domination problem. The  simplified version of the algorithm that is used for computing   the weighted domination number of a tree is presented in Section \ref{section:tree}. Special cases of graphs, i.e. path-like graphs and cycle-like graphs are regarded  in Section \ref{section:subalg}. We write algorithms for calculating their weighted domination parameters and weighted domination number.
In Section \ref{section:summary}, the algorithm for general cacti is given and its time complexity is estimated.


\section{Definitions and preliminaries}

\label{section:def}

\noindent
{\bf A vertex-weighted graph and the weighted domination number}

Let $G=(V,E)$ be a graph with a set of vertices $V=V(G)$ and a set of edges $E=E(G)$. Denote by $N(v)$ the open neighborhood of a vertex $v$ i.e. the set of vertices adjacent to the vertex $v$ and by $N[v]$ the closed neighborhood of a vertex $v$: $N[v]=\{v\} \cup N(v)$. Let $S$ be any subset of the set of vertices $V$. Denote by $N(S)$ the open neighborhood of the set $S$ i.e. the set of vertices adjacent to any vertex in $S$ and similarly by $N[S]$ the closed neighborhood of $S$: $N[S]=S \cup N(S)$.
A subset $D \subseteq V$ is a dominating set if $N[D]=V$. A domination number $\gamma(G)$ is the minimum cardinality among all dominating sets of the graph $G$.

In this article, a weighted graph $(G,w)$ is a graph together with a positive real weight-function $w:V \to {\mathbb R}_{+}$. For the vertex $v_{j} \in V$ we shall write $w_{j}=w(v_{j})$. The weight of a dominating set $D$ is defined as $w(D)=\sum_{v_{j} \in D} w_{j}$. Finally, the weighted domination number (WDN) $\gamma_{w}(G)$ of the graph $G$ is the minimum weight of a dominating set, more precisely
\begin{equation}
   \gamma_{w}(G) = \min\Big\{ w(D)\, \big| \; D \; \hbox{is a dominating set} \Big\}  \;.
   \label{eq:WeightedDominationNumber}
\end{equation}


\noindent
{\bf Cactus graph and its skeleton}

A graph $K=(V(K),E(K))$ is a cactus graph if and only if any two cycles of $K$ have at most one vertex in common.
 Equivalently, any edge of a cactus lies on at most one cycle.
Skeleton structure of a cactus is elaborated  in \cite{BKLAPN}, where it is
shown that the vertices of a cactus graph are of three types:
\begin{enumerate}
   \item [$\bullet$]  $C$-vertex is a vertex on a cycle of degree 2,
   \item [$\bullet$]  $G$-vertex is a vertex not included in any cycle,
   \item [$\bullet$]  $H$-vertex or a hinge is a vertex which is included in at least one cycle and is of degree $\geq 3$.
\end{enumerate}
By a {\em subtree} in a cactus we mean a tree induced by a subset of $G$-vertices and $H$-vertices only.
 A {\em graft} is a maximal subtree in a cactus.
 A subgraph of a cactus is called  a {\em block} when it is either a cycle or a graft.
\\

\noindent
{\bf Depth First Search  (DFS) algorithm}

The DFS is a well known method for exploring graphs.
It can be used for recognizing cactus graphs providing  the data structure (see \cite{GTAA}, \cite{ZZCWW}, \cite{ZZACP}, \cite{ZZTOCP}).
Consider, we have a cactus graph $K$. We can distinguish one vertex as a root of $K$ and denote it by $r$.
After running the  DFS algorithm, the vertices of $K$ are DFS ordered. The order is given by the order in which DFS visits the vertices. (Note that the DFS order of a graph is not unique as we can use any vertex as the starting vertex (the root)  and can visit the neighbors of a vertex in any order. However, here we can assume that the DFS order is given and is fixed.)

We denote by  $\dfn(v)$  the position of $v$ in the DFS order and we set $\dfn(r)=0$.
$\dfn$ is called the depth first number. Following \cite{ZZCWW} and \cite{ZZACP},
it is useful to store the information recorded during the DFS run in four arrays, called the
DFS (cactus) data structure:
\begin{enumerate}
   \item [$\bullet$] $\father(v)$ is the unique predecessor (father) of vertex $v$ in the rooted tree, constructed with the DFS.
   \item [$\bullet$] $\rooot(v)$ is the root vertex of the cycle containing $v$ i.e. the first vertex of the cycle (containing $v$) in the DFS order. If $v$ does not lie on a cycle, then $\rooot(v)=v$. We set $\rooot(r)=r$. (In any DFS order, if $\dfn(w) < \dfn(v)$ and $w$ is the root of the cycle containing $v$ and $v$ is the root of another cycle (it is a hinge), then $\rooot(v)=w$.)
   \item  [$\bullet$] For vertices on a cycle (i.e. $\rooot(v) \not= v$), orientation of the cycle is given by
        $\successor(v) =z$, where $z$ is the son of   $\rooot(v)$ that is visited on the cycle first.
        If $\rooot(v) = v$ , then $\successor(v)=v$.
   \item [$\bullet$] $\ind(v) := |\{u\,\big|\; \father(u)=v\}|$ is the number of sons of $v$ in the DFS tree.
\end{enumerate}

Below we write the pseudocode of the DFS algorithm that provides the data structure of cacti. The idea is taken from \cite{GTAA}.
    To mark a visited vertex in the procedure, we introduce auxiliary array $\mmark$ (as in \cite{GTAA}). At the beginning of the algorithm, we set MARK$(v)=0$ for every vertex in $K$. During the algorithm, whenever a vertex $v$ is visited for the first time, the value MARK$(v)$ becomes $1$ and DFN$(v)$ is increased by $1$.\\
\begin{algorithm}{}
 \caption{DFS algorithm}
 \begin{algorithmic}
 \State
 {\bf Data:} {Rooted cactus $(K,r)$ with vertices $V(K)$ and edges $E(K)$;} \\
 {\bf initialize}\\
       \quad $i=0$; \\
       \quad For every vertex $v$ in $K$ set \\
       \qquad $\father(v)=v$;
       \quad $\mmark(v)=0$;
       \quad $\rooot(v)=v$; \\
       \qquad $\successor(v)=v$;
       \quad $\ind(v)=0$;
       \quad $\dfn(v)=0$; \\
       \quad and for the root $r$ reset: \\
       \qquad $\mmark(r)=1$; \\
       \quad $v=r$;
    \end{algorithmic}
    \label{algorithm:dfs:skeleton:1}
\end{algorithm}

\begin{algorithm}
   \begin{algorithmic}
   \State
   {\bf repeat}
      \begin{algorithmic}
          \If{{$\hbox{\rm all the edges incident to}  \; v \; \hbox{\rm have already been labeled
                "examined"} $}
                \State ($v$  {\it is completely scaned})}
                \State $v = \father(v)$
          \Else {\it \ (an edge $(v,w)$ is not labeled "examined")}
                 \State The edge $(v,w)$ label "examined" and {\bf do} the following
                  \If{\rm $\mmark(w)=0$}
                     \State $i=i+1$;
                     \State $\dfn(w)=i$;
                     \State $\mmark(w)=1$;
                     \State $\father(w)=v$;
                     \State $\ind(w)=\ind(v)+1$;
                     \State $v=w$.
                  \Else {\em ($\mmark(w)=1$, that means we have a cycle)}
                     \State label the edge $(w,v)$ "examined";
                     \State $\rooot(v)=w$;
                     \State $u=\father(v)$;
                     \State {\bf repeat} {\it (assigning the root $w$ of vertices of the cycle)}
                              \State\qquad  $z=u$;
                              \State\qquad  $\rooot(z)=w$;
                              \State\qquad  $u=\father(z)$;
                     \State {\bf until} $u=w$. {\it (now $z$ determines the orientation of the cycle with the
                              \State \qquad \qquad \quad \;\; root $w$)}
                     \State {\bf repeat} {\it (assigning the successor $z$ i.e. the orientation of vertices of 
                     \State \qquad \quad \; the cycle)}
                              \State\qquad $\successor(v)=z$;
                              \State\qquad $v=\father(v)$;
                     \State {\bf until} $v=w$.
                     \State $v=w$;
                  \EndIf
       \EndIf
       \end{algorithmic}
    \end{algorithmic}
    {\bf until} $v=r$ and all edges incident to $r$ are "examined" \\
    {\bf Result:} arrays $\father$, $\rooot$, $\successor$, $\ind$, $\mmark$, $\dfn$.
 \medskip
 \caption{DFS algorithm - Part 2}
 \label{algorithm:dfs:skeleton:2}
\end{algorithm}

\noindent
Direct  correspondence of the definitions of $C$, $G$, $H$-vertices in a rooted cactus $(K,r)$ and arrays $\father$, $\rooot$, $\successor$ and $\ind$ is described in the following lemma
\begin{lemma}[($C$,$G$,$H$-vertices in DFS array)] $  $
   \begin{enumerate}
      \item For a vertex $v \ne r$ the following holds
            \begin{enumerate}
               \item $v$ is a $C$-vertex if and only if $\rooot(v) \ne v$ and $\ind(v)=1$
               \item $v$ is a $G$-vertex if and only if $\rooot(v)=v$ and $\successor(v)=v$ and for every son $u$ of $v$ we have $\rooot(u) \ne v$
               \item $v$ is a $H$-vertex if and only if either ($\rooot(v)=v$ and $\successor(v) = v$ and for at least one son $u$ of $v$ we have $\rooot(u) =v$ ) or ($\rooot(v) \ne v$ and $\ind(v)>1$).
            \end{enumerate}
      \item For the root $r$ we have
            \begin{enumerate}
               \item $r$ is a $C$-vertex if and only if $\ind(r) = 1$ and for the son $u$ of $r$ ($\dfn(u)=1$) we have $\rooot(u)=v$
               \item $r$ is a $G$-vertex if and only if for every son $u$ of $r$ we have $\rooot(u)=u$
               \item $r$ is a $H$-vertex if and only if $\ind(r) > 1$ and for at least one son $u$ of $r$ we have $\rooot(u)=r$.
            \end{enumerate}
   \end{enumerate}
\end{lemma}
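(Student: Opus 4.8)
The plan is to prove each equivalence by a direct case analysis that reads the skeleton type of a vertex off the four DFS arrays. The main tool is the standard structural description of a DFS run on a cactus: the tree edges form the rooted spanning tree with parent map $\father$ and child-count $\ind(v)=|\{u:\father(u)=v\}|$, every non-tree edge is a back edge joining a vertex to one of its proper ancestors, and---because $K$ is a cactus---each back edge closes exactly one cycle while each cycle is closed by exactly one back edge. From this I would first set up the dictionary the lemma rests on: $\rooot(v)\neq v$ holds precisely when $v$ is a non-root vertex of the (unique) cycle on which it lies as a non-root vertex, and $\rooot(v)=v$ precisely when $v$ is not such a vertex; in the latter case one also has $\successor(v)=v$, since $\successor$ is overwritten only during the back-edge scan of a cycle. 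Furthermore $v$ is itself the root of some cycle exactly when some child $u$ of $v$ satisfies $\rooot(u)=v$ (the first child traversed along that cycle), so $v$ lies on a cycle if and only if $\rooot(v)\neq v$ or some child $u$ has $\rooot(u)=v$. Finally, the degree of $v$ is recovered as $1+\ind(v)+\beta(v)$ for $v\neq r$, and as $\ind(r)+\beta(r)$ for the root, where $\beta(v)$ is the number of back edges at $v$, i.e.\ the number of cycles closed at $v$ (respectively the number of cycles rooted at $r$).

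With this dictionary the argument splits cleanly. For $v\neq r$ with $\rooot(v)=v$: such a $v$ is not a non-root cycle vertex, so it can lie on a cycle only as a cycle root, which by the dictionary happens iff some child $u$ has $\rooot(u)=v$. If no such child exists, $v$ lies on no cycle and is a $G$-vertex---with $\successor(v)=v$ holding automatically---which is part~1(b); if such a child exists, $v$ is a cycle root and, being distinct from $r$, then has degree at least $3$ (two cycle neighbours plus its parent), hence is an $H$-vertex, the first disjunct of part~1(c). For $v\neq r$ with $\rooot(v)\neq v$: here $v$ lies on a cycle, so it is a $C$-vertex if $\deg v=2$ and an $H$-vertex if $\deg v\geq 3$, and the point is to translate these degree conditions into the conditions $\ind(v)=1$ and $\ind(v)>1$ respectively (in particular one must check that no non-root cycle vertex can have $\ind$ too small), giving part~1(a) and the second disjunct of part~1(c). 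The global root is treated in the same spirit: $\rooot(r)=r$ always, $r$ lies on a cycle iff some child $u$ has $\rooot(u)=r$ and it is then necessarily that cycle's root, and $\deg r=\ind(r)+\beta(r)$; one then reads off that $r$ is a $C$-vertex iff it lies on a single cycle with $\ind(r)=1$ (and the child with $\dfn(u)=1$ is forced to have $\rooot(u)=r$), a $G$-vertex iff no child $u$ has $\rooot(u)=r$, and an $H$-vertex iff some child $u$ has $\rooot(u)=r$ and $\ind(r)>1$, matching parts~2(a)--(c). I would close by checking that inside each regime ($\rooot(v)=v$ versus $\rooot(v)\neq v$, and likewise for $r$) the listed array conditions are mutually exclusive and exhaust all vertices, so the equivalences are genuine.

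The step I expect to be the main obstacle is the bookkeeping at the extremal vertices of cycles---the cycle root $\rooot(v)$ and the unique vertex whose back edge closes a given cycle---together with the parallel care the global root $r$ needs because it has no $\father$. Concretely, the delicate part is establishing the precise correspondence between $\ind(v)$ and $\deg v$ used in the case $\rooot(v)\neq v$: one must account carefully for how the closing back edge of a cycle and any pendant grafts or further cycles attached along that cycle are counted (in $\ind$, versus in $\beta$), and verify that the resulting arithmetic is exactly what the conditions $\ind(v)=1$ and $\ind(v)>1$ encode. Everything else is a mechanical translation between the defining properties of the skeleton types---degree and cycle-membership---and the recorded arrays, using only elementary facts about depth-first search on cacti and the back-edge/cycle correspondence.
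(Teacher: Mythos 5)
Your plan defers exactly the step on which the whole lemma hinges, and that step, carried out honestly, does not go through: with the paper's definition $\ind(v)=|\{u \mid \father(u)=v\}|$ (the number of DFS-tree sons), the translation of ``$\deg v=2$'' into ``$\ind(v)=1$'' fails at the vertex of each cycle that is incident to the closing back edge, i.e.\ the last-visited vertex of the cycle. Concretely, take the triangle on $r,a,b$ rooted at $r$ and visited in that order: $b$ is a $C$-vertex, $\rooot(b)=r\neq b$, but $\ind(b)=0$, so the forward direction of 1(a) already fails, and $b$ satisfies none of the three listed array conditions. Dually, attach one pendant vertex to $b$: now $b$ is an $H$-vertex (degree $3$, on a cycle) with $\rooot(b)\neq b$ and $\ind(b)=1$, so it meets the condition of 1(a) and not that of 1(c). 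Your own bookkeeping formula $\deg v=1+\ind(v)+\beta(v)$ makes the obstruction visible: $\beta(v)=1$ precisely at such closing vertices, so $\ind(v)$ alone cannot separate $\deg v=2$ from $\deg v\geq 3$ there; the parenthetical check you postpone (``no non-root cycle vertex can have $\ind$ too small'') is exactly the claim that is false. (A smaller slip in the same dictionary: for $v\neq r$, $\beta(v)$ must also count back edges entering $v$ from descendants, i.e.\ cycles rooted at $v$, not only the cycle closed at $v$.)

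So this is not a routine verification left to the reader: items 1(a) and the second disjunct of 1(c), read with $\ind$ as the number of sons, are not provable, and no amount of care at the deferred step can close the gap without modifying the statement --- for instance by letting $\ind(v)$ also count the back edge leaving $v$ (equivalently, the number of neighbours of $v$ with larger $\dfn$), or by adding a separate test detecting the closing vertex of a cycle. The remainder of your dictionary is sound: $\rooot(v)\neq v$ iff $v$ is a non-root vertex of its unique cycle, $v$ roots a cycle iff some son $u$ has $\rooot(u)=v$, the identification of $G$-vertices, the first disjunct of 1(c), and the analysis at the root $r$ (where in 2(a) the paper's ``$\rooot(u)=v$'' should read ``$\rooot(u)=r$'') are all correct. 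For comparison, the paper supplies no proof at all --- the lemma is asserted as a ``direct correspondence'' of the definitions --- so a case analysis of the kind you outline is the right sort of argument; it simply cannot be completed for 1(a) and 1(c) in the form stated, and an honest write-up would have to flag and repair that.
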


\begin{rem}
   For any vertex $v \in V(K)$ and his father $w=\father(v)$, vertices with $\dfn$'s
   $$   \dfn(w),\dfn(w)+1,\ldots,\dfn(v)-1 $$
   (and all corresponding edges induced by  $V(K)$) form a  rooted subcactus  with the root $w$, denote it
$(\widetilde{K}_{w},w)$. Graphs $\widetilde{K}_{w}$ and $\{v\}$ are disjoint.
   \label{remark:subcactus}
\end{rem}

\noindent
{\bf Observation.}
 Assume
 the last vertex $l$ in the DFS order of a cactus $K$ lies on a subtree $T$ in $K$. Let $w = \father(l)$ and $w$ be the root (according to DFS order) of any subcactus $\widetilde{K}_{w}$, such that $\{l\} \cap V(\widetilde{K}_{w})=\emptyset$. If $\widetilde{v} \in  V(\widetilde{K}_{w})$, then $\dfn(w) \leq \dfn(\widetilde{v}) < \dfn(l)$.
Similar but perhaps a little less obvious fact  is given in  the next proposition.

\begin{proposition}
   Consider the last vertex $l$ in DFS order of a cactus $K$ lies on a cycle $C$. Then the following is true
   \begin{enumerate}
      \item  the  neighboring vertex of $l$ in the cycle $C$, which is not the father of the vertex $l$,
it is the root of the cycle $C$.
      \item   vertex $l$ is not a hinge.
      \item let  $w,v \in C$, $w=\father(v)$, $w$ is not the root of the cycle $C$ and $w$ is a hinge, i.e. the root of a subcactus $\widetilde{K}$, such that $V(C) \cap V(\widetilde{K}) = w$.
          For any  $\widetilde{v} \in V(\widetilde{K})$, we have $\dfn(w)\leq\dfn(\widetilde{v})<\dfn(v)$.
   \end{enumerate}
   \label{prop:last vertex cycle}
\end{proposition}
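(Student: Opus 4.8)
The plan is to extract everything from two consequences of $l$ being the \emph{last} vertex in the DFS order: that $\ind(l)=0$ (a son of $l$ would get a larger $\dfn$), and that every non-tree edge of the DFS forest incident to $l$ joins $l$ to one of its ancestors, since a DFS of an undirected graph creates no cross edges. I would then invoke the cactus hypothesis to bound the number of such back edges: if $ly_1$ and $ly_2$ were back edges with distinct ancestors $y_1,y_2$ of $l$, then closing each of them with the unique tree path down to $l$ produces two cycles that both contain the tree segment from the deeper of $y_1,y_2$ to $l$, hence share at least two vertices — impossible in a cactus. So $l$ has exactly the tree edge to $\father(l)$ and at most one back edge, and in particular $\deg_K(l)\le 2$.

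Item~2 is then immediate: a vertex lying on a cycle has degree at least $2$, so $\deg_K(l)=2$, which makes $l$ a $C$-vertex, whereas a hinge has degree $\ge 3$; hence $l$ is not a hinge. For item~1, since $l$ lies on $C$ and $\deg_K(l)=2$, both edges of $C$ at $l$ are the two edges of $l$, namely the tree edge $\father(l)l$ and the unique back edge $ly$. Closing $ly$ with the tree path from $y$ to $l$ yields a cycle $\gamma$ that shares the edge $ly$ with $C$; as an edge of a cactus lies on at most one cycle, $\gamma=C$. All vertices of $C=\gamma$ lie on the single tree path from $y$ to $l$, so $y$ is their common ancestor and therefore their first vertex in the DFS order, i.e.\ $y=\rooot(l)$ is the root of $C$. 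Thus the neighbour of $l$ on $C$ different from $\father(l)$ is the root of $C$.

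For item~3 I would read off from item~1 the list of vertices of $C$ along the tree path, $r_C=u_0,u_1,\dots,u_k=l$, with $u_{i+1}$ a son of $u_i$. If $w,v\in C$, $w=\father(v)$ and $w$ is not the root of $C$, then (a cactus cycle has no chords, so the edge $wv$ lies on $C$) we get $w=u_j$ and $v=u_{j+1}$ for some $1\le j\le k-1$; in particular $v$ is an ancestor of $l=u_k$ or equals it. Now $\widetilde K$ hangs off the hinge $w$ with $V(C)\cap V(\widetilde K)=\{w\}$, so its vertices are $w$ itself or descendants of $w$ explored along one of the child-branches of $w$, and therefore $\dfn(w)\le\dfn(\widetilde v)$ for every $\widetilde v\in V(\widetilde K)$. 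If that branch were explored \emph{after} the branch of $v$ — which contains all descendants of $v$, hence $l$ — then some vertex of $\widetilde K$ would receive a $\dfn$ larger than $\dfn(l)$, contradicting that $l$ is the last vertex. Hence the branch of $\widetilde K$ is explored between the visit of $w$ and the visit of $v$, which yields $\dfn(w)\le\dfn(\widetilde v)<\dfn(v)$.

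I expect item~3 to be the real obstacle: it is the one place where the global ``last vertex'' property must be combined with the structural description from item~1 (that $l$ is the far endpoint $u_k$ of the tree path forming $C$, hence a descendant of $v$), whereas items~1 and~2 are essentially local facts flowing from $\ind(l)=0$ and the two-cycles-meet-in-one-vertex axiom.
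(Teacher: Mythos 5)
Your proof is correct, and its core mechanism is the same as the paper's: every item is ultimately a contradiction with the maximality of $\dfn(l)$. The paper's own proof, however, is three one-line contradictions (for (1): if $v$ is not the root of $C$ then $\dfn(v)>\dfn(l)$; for (2): if $l$ were a hinge some vertex would have $\dfn>\dfn(l)$; for (3): $\dfn(\widetilde v)>\dfn(v)$ would force $\dfn(\widetilde v)>\dfn(l)$), whereas you take a more structural route to (1) and (2): you first bound $\deg_K(l)\le 2$ by observing that $l$ has no sons and that two back edges at $l$ would produce two cycles both containing $l$ and $\father(l)$, violating the cactus property, and you then identify $C$ with the tree path from $\rooot(l)$ down to $l$ closed by the unique back edge, from which (1) and (2) follow; your (3) is exactly the paper's branch-ordering argument at the hinge $w$, spelled out (including the point, implicit in the paper, that no vertex of $\widetilde K\setminus\{w\}$ can be a descendant of $v$ because $\widetilde K$ attaches to the rest of the cactus only at $w$). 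The trade-off: your version supplies the justification the paper leaves implicit --- its claim ``$v$ not the root $\Rightarrow\dfn(v)>\dfn(l)$'' itself rests on the cactus axioms (an edge lies on at most one cycle, two cycles share at most one vertex), which is precisely what your degree and no-chord arguments provide --- at the cost of extra length and of invoking standard DFS facts (no cross edges in an undirected DFS) that the paper never states but which are unproblematic.
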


\begin{proof}
   \begin{enumerate}
      \item  Denote by $v$ a neighboring vertex of $l$ in the cycle $C$, which is not the father of $l$. If $v$ is not the root of $C$, then $\dfn(v) > \dfn(l)$. Contradiction.
      \item  If $l$ is a hinge, according to DFS order, there exist at least one vertex with $\dfn > \dfn(l)$. Contradiction.
      \item  According to DFS order, the inequality  $\dfn(w)\leq\dfn(\widetilde{v})$  holds. Consider there is  $\widetilde{v} \in \widetilde{K}$ with $\dfn(\widetilde{v}) > \dfn(v)$. Following DFS algorithm, we have then $\dfn(\widetilde{v}) > \dfn(l)$. Contradiction.
   \end{enumerate}
\end{proof}

\section{Weighted domination parameters (WDP)}

\label{section:WDP}

Let $G$ be a graph and $v$ any vertex in $V(G)$.
Consider the following three parameters yielding related weighted domination parameters (see \cite{CAAODIG}):
\begin{definition} $ $
   \begin{enumerate}
      \item $\gamma_{w}^{00}(G,v)=\min\big\{w(D)\,\big| \; D \; \hbox{is a dominating set of}\;\,G-v\big\}=\gamma_{w}(G-v)$
      \item $\gamma_{w}^{1}(G,v)=\min\big\{w(D)\,\big| \; D \; \hbox{is a dominating set of}\;\,G\;\hbox{and}\;v \in D\big\}$
      \item $\gamma_{w}^{0}(G,v)=\min\big\{w(D)\,\big| \; D \; \hbox{is a dominating set of}\;\,G\;\hbox{and}\;v \notin D\big\}$.
   \end{enumerate}
\end{definition}

It is obvious that
\begin{equation}
   \gamma_{w}(G) = \min\big\{ \gamma_{w}^{1}(G,v), \gamma_{w}^{0}(G,v)\big\} \,.
   \label{eq:WDNGu}
\end{equation}
Since a dominating set of $G$, which does not contain the vertex $v$ is also a dominating set of $G-v$, we have the relation
\begin{equation}
   \gamma_{w}^{00}(G,v) \leq \gamma_{w}^{0}(G,v) \,.
   \label{rel:00and0}
\end{equation}
Let $D$ be a dominating set of $G-v$ such that $w(D)=\gamma_{w}(G-v)$. Then $D \cup \{v\}$ is a dominating set of $G$ and clearly
\begin{equation}
   \gamma_{w}^{1}(G,v) \leq w(v) +\gamma_{w}^{00}(G,v) \,.
   \label{rel:1and00}
\end{equation}

\begin{lemma}
   Let  $G_{1}$ and $G_{2}$ be disjoined rooted graphs with roots $v_{1}$ and $v_{2}$ respectively, and let $G$ be a disjoint union of $G_{1}$ and $G_{2}$ joined by the edge $v_{1}v_{2}$. Then the following is true:
   \begin{enumerate}
      \item $\gamma_{w}^{00}(G,v_{1}) = \gamma_{w}^{00}(G_{1},v_{1})+\gamma_{w}(G_{2})$,
      \item $\gamma_{w}^{1}(G,v_{1}) = \gamma_{w}^{1}(G_{1},v_{1})+\min\big\{\gamma_{w}^{1}(G_{2},v_{2}), \gamma_{w}^{00}(G_{2},v_{2})\big\}$,
      \item $\gamma_{w}^{0}(G,v_{1}) = \min\big\{\gamma_{w}^{0}(G_{1},v_{1})+\gamma_{w}(G_{2},v_{2}), \gamma_{w}^{00}(G_{1},v_{1})+\gamma_{w}^{1}(G_{2},v_{2})\big\}$.
   \end{enumerate}
   \label{lemma:WDNofGRAPH2vertices}
\end{lemma}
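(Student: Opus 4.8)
The plan is to prove each of the three identities as a pair of inequalities: the ``$\le$'' direction by gluing together optimal dominating sets of $G_{1}$ and $G_{2}$ of suitable types, and the ``$\ge$'' direction by taking an optimal dominating set $D$ of $G$ of the required type and splitting it as $D_{1}=D\cap V(G_{1})$, $D_{2}=D\cap V(G_{2})$. The structural fact driving everything is that $v_{1}v_{2}$ is the only edge between $V(G_{1})$ and $V(G_{2})$; hence every vertex $u\notin\{v_{1},v_{2}\}$ has all its $G$-neighbours inside the part $G_{i}$ containing it, so $u$ is dominated by $D$ in $G$ iff it is dominated by $D_{i}$ in $G_{i}$. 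Only $v_{1}$ and $v_{2}$ can be dominated ``across'' the edge $v_{1}v_{2}$, and for these two I do a short case analysis on membership in $D$ and on which side supplies the dominating neighbour. Item (1) is then immediate: deleting $v_{1}$ also deletes $v_{1}v_{2}$, so $G-v_{1}$ is the disjoint union of $G_{1}-v_{1}$ and $G_{2}$, and since a minimum-weight dominating set of a disjoint union is the union of minimum-weight dominating sets of the components, $\gamma_{w}^{00}(G,v_{1})=\gamma_{w}(G-v_{1})=\gamma_{w}(G_{1}-v_{1})+\gamma_{w}(G_{2})=\gamma_{w}^{00}(G_{1},v_{1})+\gamma_{w}(G_{2})$.

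For item (2), let $D$ be a minimum-weight dominating set of $G$ with $v_{1}\in D$. Then $D_{1}$ dominates all of $G_{1}$ (vertices other than $v_{1}$ locally, and $v_{1}$ since $v_{1}\in D_{1}$), so $w(D_{1})\ge\gamma_{w}^{1}(G_{1},v_{1})$; and $D_{2}$ dominates every vertex of $G_{2}$ except possibly $v_{2}$, hence is a dominating set of $G_{2}-v_{2}$, and if moreover $v_{2}\in D_{2}$ it is a dominating set of $G_{2}$ containing $v_{2}$, so in all cases $w(D_{2})\ge\min\{\gamma_{w}^{1}(G_{2},v_{2}),\gamma_{w}^{00}(G_{2},v_{2})\}$. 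Summing gives ``$\ge$''. For ``$\le$'' take an optimal $D_{1}$ for $\gamma_{w}^{1}(G_{1},v_{1})$ and a set $D_{2}$ realising $\min\{\gamma_{w}^{1}(G_{2},v_{2}),\gamma_{w}^{00}(G_{2},v_{2})\}$: if $D_{2}$ dominates $G_{2}$ then $D_{1}\cup D_{2}$ dominates $G$; if $D_{2}$ only dominates $G_{2}-v_{2}$, then $v_{2}$ is dominated by $v_{1}\in D_{1}$ along the edge $v_{1}v_{2}$, so $D_{1}\cup D_{2}$ again dominates $G$ and contains $v_{1}$.

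Item (3) is handled in the same spirit. If $D$ is a minimum-weight dominating set of $G$ with $v_{1}\notin D$, then $v_{1}$ is dominated either by a $G_{1}$-neighbour or by $v_{2}$. In the first subcase $D_{1}$ is a dominating set of $G_{1}$ avoiding $v_{1}$ and $D_{2}$ is a dominating set of $G_{2}$, whence $w(D)\ge\gamma_{w}^{0}(G_{1},v_{1})+\gamma_{w}(G_{2})$; in the second subcase $v_{2}\in D_{2}$, so $D_{2}$ is a dominating set of $G_{2}$ containing $v_{2}$ while $D_{1}$ only has to dominate $G_{1}-v_{1}$, whence $w(D)\ge\gamma_{w}^{00}(G_{1},v_{1})+\gamma_{w}^{1}(G_{2},v_{2})$; taking the smaller of the two bounds yields ``$\ge$''. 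The reverse inequality follows by gluing, in each of the two configurations, optimal dominating sets of $G_{1}$ and $G_{2}$ of the matching types and checking that $v_{1}$ and $v_{2}$ are dominated by $D_{1}\cup D_{2}$ while still $v_{1}\notin D_{1}\cup D_{2}$.

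I expect the bookkeeping to be the bulk of the write-up and to be entirely routine; the one load-bearing step — the ``main obstacle'' — is verifying in each gluing argument that $D_{1}\cup D_{2}$ dominates all of $G$ and not merely $G_{1}$ and $G_{2}$ separately, which is exactly where the ``$v_{1}v_{2}$ is the only connecting edge'' observation, together with the placement of $v_{1},v_{2}$ in or outside the chosen sets, is used.
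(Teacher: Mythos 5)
Your proof is correct. Note that the paper does not actually prove this lemma: it cites \cite{CAAODIG} for the unweighted case and declares the weighted generalization straightforward, so your write-up supplies exactly the omitted argument. Your route—split a dominating set $D$ of $G$ into $D_1=D\cap V(G_1)$ and $D_2=D\cap V(G_2)$, use that $v_1v_2$ is the only edge between the parts so only $v_1,v_2$ can be dominated across it, case-analyse who dominates $v_1$ and $v_2$, and glue optimal sets of matching types for the upper bounds—is the standard one and coincides with the technique the paper does spell out for the one-common-vertex case, Lemma \ref{lemma:WDNofGRAPH1vertex}; all three items check out, including reading the paper's $\gamma_{w}(G_{2},v_{2})$ in item (3) as $\gamma_{w}(G_{2})$, which is consistent with the algorithms. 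One cosmetic point: in item (2), the phrase ``$D_2$ \dots is a dominating set of $G_2-v_2$'' is literally accurate only when $v_2\notin D_2$ (a dominating set of $G_2-v_2$ must be contained in $V(G_2)\setminus\{v_2\}$); your subsequent case $v_2\in D_2$ covers the other possibility, but the dichotomy should be stated explicitly rather than as ``hence \dots, and if moreover \dots''.
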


The proof of Lemma \ref{lemma:WDNofGRAPH2vertices} (for the domination number) appears in \cite{CAAODIG}.
Generalization to weighted domination is straightforward and therefore ommited.
 A more general situation is  described by the next lemma

\begin{lemma}
   Let $G_{1}$ and $G_{2}$ be graphs with one common vertex $v_{0}$ and let $G_{1}-v_{0}$ and $G_{2}-v_{0}$ be disjoined.  Denote by $G$ the union of $G_{1}$ and $G_{2}$. Then we have
   \begin{enumerate}
      \item $\gamma_{w}^{00}(G,v_{0}) = \gamma_{w}^{00}(G_{1},v_{0})+\gamma_{w}^{00}(G_{2},v_{0})$,
      \item $\gamma_{w}^{1}(G,v_{0}) = \gamma_{w}^{1}(G_{1},v_{0}) + \gamma_{w}^{1}(G_{2},v_{0})-w(v_{0})$,
      \item $\gamma_{w}^{0}(G,v_{0}) = \min\big\{\gamma_{w}^{0}(G_{1},v_{0})+\gamma_{w}^{00}(G_{2},v_{0}),\gamma_{w}^{00}(G_{1},v_{0})+\gamma_{w}^{0}(G_{2},v_{0})\big\}$\,.
   \end{enumerate}
   \label{lemma:WDNofGRAPH1vertex}
\end{lemma}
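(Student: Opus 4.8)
The plan is to exploit that $v_0$ is a cut vertex of $G$ separating $V(G_1)\setminus\{v_0\}$ from $V(G_2)\setminus\{v_0\}$. First I would record the elementary neighbourhood facts: no edge of $G$ joins $V(G_1)\setminus\{v_0\}$ to $V(G_2)\setminus\{v_0\}$, so $N_G[u]=N_{G_i}[u]$ for every $u\in V(G_i)\setminus\{v_0\}$, while $N_G[v_0]=N_{G_1}[v_0]\cup N_{G_2}[v_0]$. Consequently, for an arbitrary $D\subseteq V(G)$, writing $D_i=D\cap V(G_i)$, one has $D_1\cap D_2\subseteq\{v_0\}$ and $v_0\in D\iff v_0\in D_1\iff v_0\in D_2$; the weight splits as $w(D)=w(D_1)+w(D_2)$ when $v_0\notin D$ and $w(D)=w(D_1)+w(D_2)-w(v_0)$ when $v_0\in D$; a vertex $u\in V(G_i)\setminus\{v_0\}$ is dominated by $D$ in $G$ iff it is dominated by $D_i$ in $G_i$; and $v_0$ is dominated by $D$ iff $D_1$ dominates $v_0$ in $G_1$ or $D_2$ dominates $v_0$ in $G_2$. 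Finally, any pair $(D_1,D_2)$ with $D_i\subseteq V(G_i)$ and $v_0\in D_1\iff v_0\in D_2$ reassembles into $D=D_1\cup D_2\subseteq V(G)$.

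For statement~1, $G-v_0$ is the disjoint union of $G_1-v_0$ and $G_2-v_0$, and a minimum weighted dominating set of a disjoint union is the union of minimum weighted dominating sets of the parts, so $\gamma_w^{00}(G,v_0)=\gamma_w(G-v_0)=\gamma_w(G_1-v_0)+\gamma_w(G_2-v_0)=\gamma_w^{00}(G_1,v_0)+\gamma_w^{00}(G_2,v_0)$. For statement~2, a set $D$ with $v_0\in D$ dominates $G$ iff $D_i$ dominates $G_i$ for $i=1,2$ (necessarily with $v_0\in D_i$): the vertices of $G_i$ other than $v_0$ are dominated precisely when $D_i$ dominates them, and $v_0\in D$ dominates itself. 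Since the two minimisations are independent, minimising $w(D)=w(D_1)+w(D_2)-w(v_0)$ over all such pairs gives $\gamma_w^1(G,v_0)=\gamma_w^1(G_1,v_0)+\gamma_w^1(G_2,v_0)-w(v_0)$.

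For statement~3, suppose $D$ dominates $G$ with $v_0\notin D$. Then $D_i$ dominates every vertex of $V(G_i)\setminus\{v_0\}$, and since $v_0$ must be dominated, $D_1$ dominates $v_0$ in $G_1$ or $D_2$ dominates $v_0$ in $G_2$. In the former case $D_1$ is a dominating set of $G_1$ avoiding $v_0$, so $w(D_1)\ge\gamma_w^0(G_1,v_0)$, and $D_2$ is a dominating set of $G_2-v_0$, so $w(D_2)\ge\gamma_w^{00}(G_2,v_0)$; the latter case is symmetric. Hence $\gamma_w^0(G,v_0)$ is at least the stated minimum. For the reverse inequality, glue an optimal dominating set for $\gamma_w^0(G_1,v_0)$ with an optimal dominating set for $\gamma_w^{00}(G_2,v_0)=\gamma_w(G_2-v_0)$: both avoid $v_0$, their (disjoint) union dominates all of $G$ because $v_0$ is dominated from the $G_1$ side, and its weight is $\gamma_w^0(G_1,v_0)+\gamma_w^{00}(G_2,v_0)$; the symmetric construction yields the other term, so $\gamma_w^0(G,v_0)$ is at most the minimum.

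I do not expect a genuine obstacle. The only points needing care are the $-w(v_0)$ correction in statement~2 for the shared vertex counted in both $w(D_1)$ and $w(D_2)$, and, in statement~3, checking that the disjunction ``$v_0$ is dominated from the $G_1$ side or from the $G_2$ side'' produces exactly the two terms of the minimum and that the glued sets used for the reverse inequality are indeed dominating sets of $G$ avoiding $v_0$. The whole argument mirrors that of Lemma~\ref{lemma:WDNofGRAPH2vertices}, with the connecting edge $v_1v_2$ there replaced here by the identified vertex $v_0$.
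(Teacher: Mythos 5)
Your proof is correct and follows essentially the same route as the paper's: split an optimal dominating set at the cut vertex $v_0$ into $D_i=D\cap V(G_i)$ for the lower bounds, and glue optimal sets of $G_1$ and $G_2$ (with the $-w(v_0)$ correction in statement~2) for the upper bounds. The only cosmetic difference is in statement~3, where you fold the ``$v_0$ dominated from both sides'' case directly into the two main cases, whereas the paper lists it as a third term and removes it via $\gamma_{w}^{00}(G_i,v_0)\leq\gamma_{w}^{0}(G_i,v_0)$; both handle it correctly.
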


\begin{proof}
   \begin{enumerate}
      \item
As  $G_{1}-v_{0}$ and $G_{2}-v_{0}$ are disjoined, it follows
$\gamma_{w}^{00}(G,v_{0}) = \gamma_{w}^{00}(G_{1},v_{0})+\gamma_{w}^{00}(G_{2},v_{0})  .$
       \item
Let $D$ be a dominating set of $G$ with $v_0 \in D$ such that  $w(D) = \gamma_{w}^{1}(G,v_{0})$.
Then
$D_1 = D \cap V(G_1)$ is a dominating set of  $ G_1$ and  $w(D_1) \geq  \ \gamma_{w}^{1}(G_{1},v_{0}) $.
Similarly,
 $D_2 = D \cap V(G_2)$ is a dominating set of  $ G_2$ and  $w(D_2) \geq  \gamma_{w}^{1}(G_{2},v_{0}) $.
Hence   $\gamma_{w}^{1}(G,v_{0}) \geq \gamma_{w}^{1}(G_{1},v_{0}) + \gamma_{w}^{1}(G_{2},v_{0})-w(v_{0})$.
On the other hand, for any dominating sets $D_1$ and $D_2$ with
 $w(D_1) \geq  \ \gamma_{w}^{1}(G_{1},v_{0}) $ and     $w(D_2) \geq  \gamma_{w}^{1}(G_{2},v_{0}) $,
$D = D_1 \cup D_2$ dominates $G$.
As  $D_1 \cap D_2 = \{v_0\}$,  we have $w(D) = w(D_1) +w(D_2) -  w(v_0)$     and therefore
 $\gamma_{w}^{1}(G,v_{0}) \leq  w(D)  = \gamma_{w}^{1}(G_{1},v_{0}) + \gamma_{w}^{1}(G_{2},v_{0}) - w(v_{0})$.
       \item
As we consider only dominating sets with $v_0 \not\in D$,
$v_0$ has to be dominated by some other vertex.
We distinguish three cases: either $v_0$ is dominated by  $D_1 = D \cap V(G_1)$ ,    or  $D_2 =D  \cap V(G_2)$, or by
both $D_1$ and $D_2$.
Assuming  $w(D) =  \gamma_{w}^{0}(G,v_{0})$, and
recalling   that   $\gamma_{w}^{00}(G_{i},v_{0}) \leq \gamma_{w}^{0}(G_{i},v_{0}$ for $i=1,2$,
it follows
\begin{eqnarray}
  \gamma_{w}^{0}(G,v_{0})  &\!\!= &
 \!\!\min\big\{\gamma_{w}^{0}(G_{1},v_{0})+\gamma_{w}^{00}(G_{2},v_{0}),\gamma_{w}^{00}(G_{1},v_{0})+\gamma_{w}^{0}(G_{2},v_{0}),    \nonumber\\
  &   &  \!\!\qquad  \;\, \gamma_{w}^{0}(G_{1},v_{0})+\gamma_{w}^{0}(G_{2},v_{0}) \big\} \geq
   \nonumber\\
 & \!\! \geq & \!\!\min\big\{\gamma_{w}^{0}(G_{1},v_{0})+\gamma_{w}^{00}(G_{2},v_{0}),\gamma_{w}^{00}(G_{1},v_{0})+\gamma_{w}^{0}(G_{2},v_{0})\big\} \,.  \nonumber
\end{eqnarray}
On the other hand,   we can
construct dominating sets of $G$ by taking a union of two dominating sets $D_1$ and $D_2$ of $G_1$ and $G_2$ respectively. At least one of $D_1$, $D_2$  (or both) must dominate $v_0$.
Taking either ($w(D_1) = \gamma_{w}^{0}(G_{1},v_{0})$ and $w(D_2) = \gamma_{w}^{00}(G_{2},v_{0})$)
or  ($w(D_1) = \gamma_{w}^{00}(G_{1},v_{0})$ and $w(D_2) = \gamma_{w}^{0}(G_{2},v_{0})$),
we conclude
\begin{eqnarray}
 \min\big\{\gamma_{w}^{0}(G_{1},v_{0})+\gamma_{w}^{00}(G_{2},v_{0}),\gamma_{w}^{00}(G_{1},v_{0})+\gamma_{w}^{0}(G_{2},v_{0}),    \nonumber\\
  \gamma_{w}^{0}(G_{1},v_{0})+\gamma_{w}^{0}(G_{2},v_{0}) \big\}   &\!\! \geq &
  \nonumber  \\
\geq \min\big\{\gamma_{w}^{0}(G_{1},v_{0})+\gamma_{w}^{00}(G_{2},v_{0}),\gamma_{w}^{00}(G_{1},v_{0})+\gamma_{w}^{0}(G_{2},v_{0})\big\}  & \!\!\geq &   \!\!  \gamma_{w}^{0}(G,v_{0}) .  \nonumber
\end{eqnarray}
\end{enumerate}
\end{proof}

\section{Algorithm for trees }

\label{section:tree}

In this section, let  $G=(V,E)$ be  a vertex-weighted tree,
and let $T$ be an associated rooted tree with root $r$
($r$ can be  arbitrary but fixed vertex in  $V(G)$
).
In \cite{NWODIWT}, the authors write the algorithm for calculating the weighted domination of a vertex-edge-weighted tree. It can of course be applied to a vertex-weighted tree, the case of interest in this paper.
Another algorithm for calculating the weighted domination number of a tree appears in \cite{CAAODIG}.
We write a new  algorithm for weighted domination number of a  weighted tree
based on the DFS data structure  here in order to illustrate the main idea on a well understood special case
in order to clarify the development of the general algorithm in the following sections.

\begin{figure}[h]
   \centering
      \includegraphics[width=0.27\textwidth]{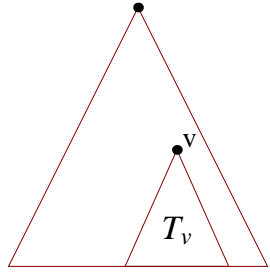}\\
      \caption{{\small The rooted subtree}}
      \label{fig:RootedSubtree}
\end{figure}

Denote by $(T_{v},v)$ the rooted subtree with the root $v$ as is shown in Figure \ref{fig:RootedSubtree}.
In our algorithm we use the DFS order of vertices (i.e. the   DFS cactus data structure provided by the DFS algorithm).
We supplement the DFS data structure by four arrays of the initial values of the parameters $\gamma_{w}^{00}$, $\gamma_{w}^{1}$, $\gamma_{w}^{0}$ and $\gamma_{w}$.
  Initially,   we set for every vertex $v$
\begin{equation}
   \gamma_{w}^{00}(v)=0, \quad \gamma_{w}^{1}=w(v), \quad \gamma_{w}^{0}=\infty \quad \hbox{and} \quad \gamma_{w}(v)=w(v) \,.
\end{equation}
The algorithm's starting point is the last vertex $v$ in the DFS order with the  corresponding parameters $\gamma_{w}^{00}(v)$, $\gamma_{w}^{1}(v)$, $\gamma_{w}^{0}(v)$ and $\gamma_{w}(v)$. In the
data structure  we find the father of $v$ and  call  it $w$.
If $\dfn(w) \ne \dfn(v)-1$ (i.e. $\dfn(w) < \dfn(v)-1$), there exists rooted subtree $(\widetilde{T}_{w},w)$ (see Remark \ref{remark:subcactus}). The algorithm calls itself  recursively for the subtree $\widetilde{T}_{w}$ and
then accordingly updates
the parameters $\gamma_{w}^{00}(w)=\gamma_{w}^{00}(\widetilde{T}_{w},w)$, $\gamma_{w}^{1}(w)=\gamma_{w}^{1}(\widetilde{T}_{w},w)$, $\gamma_{w}^{0}(w)=\gamma_{w}^{0}(\widetilde{T}_{w},w)$ and $\gamma_{w}(w)=\gamma_{w}(\widetilde{T}_{w})$.
When $w$ and $v$ are the last two vertices in the DFS order, the parameters at $w$
are computed according to
 Lemma \ref{lemma:WDNofGRAPH2vertices}, and the computation continues regarding $w$ as the last vertex.
For pseudocode of the algorithm see Algorithm \ref{algorithm:RootedTree}.

\begin{figure}[h]
   \centering
      \includegraphics[width=0.27\textwidth]{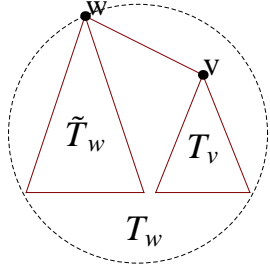}\\
      \caption{{\small Subtrees $T_{v}$, $\widetilde{T}_{w}$ and $T_{w}$}}
      \label{fig:Treesvw}
\end{figure}

\begin{algorithm}[H]
     \caption{TREE}
    \begin{algorithmic}
    \State
    {\bf Data:} A rooted tree $(T,r)$ with DFS ordered vertices in the DFS table \\
    {\bf initialize} $\gamma_{w}^{00}(v)=0$, $\gamma_{w}^{1}(v)=w(v)$, $\gamma_{w}^{0}(v)=\infty$ and $\gamma_{w}(v)=w(v)$ for
                     every vertex $v$ in the DFS table  \\
    {\bf set} $v$ is the last vertex in the DFS order; \\
    {\bf repeat}
       \begin{algorithmic}
           \State $w=\father(v)$;
           \If{$\dfn(w) \ne \dfn(v)-1$}
                          \State {\bf call} algorithm {\bf TREE} for the rooted tree on vertices with \\
                          \quad \; $\dfn = \dfn(w),\ldots,\dfn(v)-1$ and the root $w$
                          {\it (we obtain new values \\
                          \quad \; for $\gamma_{w}^{00}(w)$, $\gamma_{w}^{1}(w)$, $\gamma_{w}^{0}(w)$ and $\gamma_{w}(w)$)};
           \EndIf \\
           $\gamma_{w}^{00}(w) = \gamma_{w}^{00}(w)+\gamma_{w}(v)$; \\
           $\gamma_{w}^{1}(w) = \gamma_{w}^{1}(w) + \min\{\gamma_{w}^{1}(v),\gamma_{w}^{00}(v)\}$; \\
           $\gamma_{w}^{0}(w) = \min\{ \gamma_{w}^{0}(w)+\gamma_{w}(v), \gamma_{w}^{00}(w)+\gamma_{w}^{1}(v)\}$; \\
           $\gamma_{w}(w) = \min\{\gamma_{w}^{1}(w), \gamma_{w}^{0}(w) \}$; \\
           $v=w$;
       \end{algorithmic}
    {\bf until} $v=r$ \\
    {\bf Result:} $\gamma_{w}^{*}(T,r)=\gamma_{w}^{*}(v)$ \, for \, $*=00,1,0$;
     \begin{algorithmic}
      \State \qquad \quad $\gamma_{w}(T) = \gamma_{w}(v)$.
     \end{algorithmic}
     \end{algorithmic}
 \label{algorithm:RootedTree}
\end{algorithm}

\newpage
\begin{proposition}[(Time complexity of TREE)]
   Algorithm TREE needs $4(n-1)$ additions and $3(n-1)$ $\min$-operations.
   \label{prop:TimeComplexityTREE}
\end{proposition}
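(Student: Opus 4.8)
The plan is to induct on the number of vertices $n$ of the rooted tree $(T,r)$, keeping track of the arithmetic operations block by block. The key local fact is immediate from Algorithm~\ref{algorithm:RootedTree}: a single execution of the update block
$\gamma_{w}^{00}(w)=\gamma_{w}^{00}(w)+\gamma_{w}(v)$,
$\gamma_{w}^{1}(w)=\gamma_{w}^{1}(w)+\min\{\gamma_{w}^{1}(v),\gamma_{w}^{00}(v)\}$,
$\gamma_{w}^{0}(w)=\min\{\gamma_{w}^{0}(w)+\gamma_{w}(v),\gamma_{w}^{00}(w)+\gamma_{w}^{1}(v)\}$,
$\gamma_{w}(w)=\min\{\gamma_{w}^{1}(w),\gamma_{w}^{0}(w)\}$
costs exactly $1+1+2+0=4$ additions and $0+1+1+1=3$ $\min$-operations. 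This block is executed once per iteration of the \emph{repeat}-loop, counting the main call and all recursive calls, so it suffices to prove that the total number of iterations over the whole recursion is $n-1$; equivalently, that every edge of $T$ serves as the pair $(v,w=\father(v))$ for exactly one update block.

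First I would observe that the control flow of TREE — the branch test $\dfn(w)\ne\dfn(v)-1$, the recursion boundaries, and the loop condition $v=r$ — depends only on the rooted DFS tree and not on the weights or the current $\gamma$-values; hence the operation count is a function of $(T,r)$ alone and the claim can be proved purely combinatorially. The base case $n=1$ is vacuous (reading \emph{repeat $\dots$ until $v=r$} as performing no iteration when $v=r$ already), and $4(n-1)=0$. For $n\ge 2$, let $l$ be the last vertex in DFS order and $w=\father(l)$; since $l$ is last it is a leaf and $T_w$ occupies the dfn-interval $\dfn(w),\dots,\dfn(l)$, so $\widetilde T_w:=T_w-l$ occupies $\dfn(w),\dots,\dfn(l)-1$ and has, say, $m\ge 1$ vertices. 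The first loop iteration then (i) calls TREE recursively on $(\widetilde T_w,w)$ when $m\ge 2$, costing $4(m-1)$ additions and $3(m-1)$ $\min$-operations by the induction hypothesis (and nothing when $m=1$), (ii) runs one update block for the edge $(l,w)$, and (iii) sets $v:=w$; the subtotal so far is $4m$ additions and $3m$ $\min$-operations, uniformly in $m\ge 1$.

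The main point to get right is that from this moment the loop behaves, step for step, like a fresh run of TREE on the smaller rooted tree $T^{\dagger}$ obtained from $T$ by deleting $V(T_w)\setminus\{w\}$: in $T^{\dagger}$ the vertex $w$ is a leaf, and since every vertex with $\dfn>\dfn(w)$ lay inside $T_w$, it is the last vertex of $T^{\dagger}$ — exactly the starting configuration of a fresh invocation. The stored values $\gamma^{*}(w)$ differ from the fresh initial values, but by the control-flow observation this does not change the operation count, so by the induction hypothesis the continuation costs $4(|T^{\dagger}|-1)$ additions and $3(|T^{\dagger}|-1)$ $\min$-operations with $|T^{\dagger}|=n-m<n$. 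Adding the two contributions yields $4m+4(n-m-1)=4(n-1)$ additions and $3m+3(n-m-1)=3(n-1)$ $\min$-operations, closing the induction. The hard part is precisely the "continuation $=$ fresh run on $T^{\dagger}$" identification, together with pinning down the degenerate case $n=1$ and the intended semantics of the \emph{repeat}-loop; the rest is bookkeeping.
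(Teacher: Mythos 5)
Your proposal is correct and follows essentially the same route as the paper's proof: each parent--child update block costs exactly $4$ additions and $3$ $\min$-operations, and such a block is executed exactly once per edge of the tree, giving $4(n-1)$ and $3(n-1)$ in total. The only difference is that you justify the ``one block per edge'' fact by a formal induction over the recursion, whereas the paper simply asserts that the algorithm sticks the subtrees $(T_w,w)$ and $(T_v,v)$ once for every existing edge $(w,v)$.
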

\begin{proof}
   Using Lemma \ref{lemma:WDNofGRAPH2vertices} and the equation $\gamma_{w}(T_{w}) = \min\{\gamma_{w}^{1}(T_{w},w),\gamma_{w}^{0}(T_{w},w) \}$ in a step of the algorithm for rooted subtrees $(T_{w},w)$ and $(T_{v},v)$ (where $w=\father(v)$), the calculation demands $4$ additions and $3$ $\min$-operations.
   The algorithm sticks rooted subtrees $(T_{w},w)$ and $(T_{v},v)$ for every existing edge $(w,v)$.
\end{proof}

\section{Cacti - more lemmas and subalgorithms}

\label{section:subalg}

The   algorithm for cactus graph should exploit the tree structure obtained from DFS representation. It would be meaningful to preserve the form of algorithm TREE if the current vertex of a cactus lies on a tree. Special attention should be paid to the current vertex on a cycle. In this section we prepare subalgorithm CYCLE-LIKE for the rooted cycle $(C,r)$, which calculates parameters $\gamma_{w}^{00}(C,r)$, $\gamma_{w}^{1}(C,r)$, $\gamma_{w}^{0}(C,r)$ and $\gamma_{w}(C)$.


\subsection{Path-like cactus}

Let $\{v_{1},\ldots,v_{n}\}$ be a path and $(G_{1},v_{1}),\ldots,(G_{n},v_{n})$ disjoined rooted graphs as is shown in Figure \ref{fig:GraphsOnPath}.
\begin{figure}[h]
   \centering
      \includegraphics[scale=1.1]{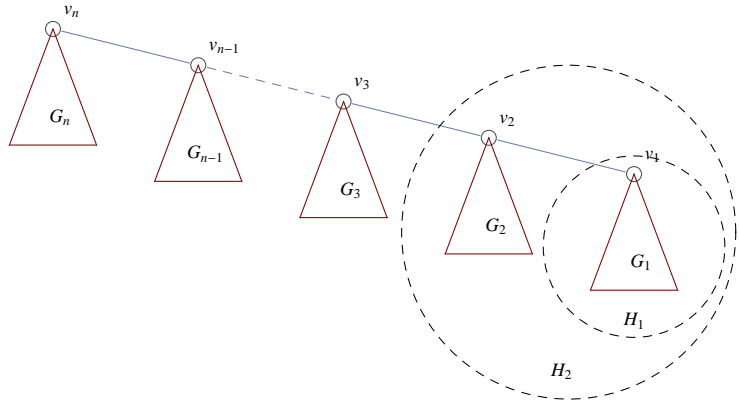}\\
      \caption{{\small Path-like cactus}}
      \label{fig:GraphsOnPath}
\end{figure}
Denote obtained graph by $G$ and consider it as a rooted graph $(G,v_{n})$.
\begin{lemma}
   Let $G_{1},G_{2},\ldots,G_{n}$ be disjoined graphs with specific vertices $v_{1},v_{2},\ldots$ $\ldots,v_{n}$ respectively and let $G$ be the disjoint union of $G_{1},G_{2},\ldots,G_{n}$, joined by the edges $v_{1}v_{2}, v_{2}v_{3},\ldots,v_{n-1}v_{n}$. For every $i \in \{1,\ldots,n\}$, denote by $H_{i}$ the union of graphs $G_{1},\ldots,G_{i}$, i.e. $ H_{i}=\left(\bigcup_{j=1}^{i}G_{j}\right) \cup \left(\bigcup_{j=1}^{i-1}(v_{j},v_{j+1})\right)$. If $i > 1$, the following is true
   \begin{eqnarray}
      \!\!\!\!\!\!\! \gamma_{w}^{00}(H_{i},v_{i}) & \! = &\! \gamma_{w}^{00}(G_{i},v_{i})+\gamma_{w}(H_{i-1}) \label{eq:pathlike:00} \\
      \!\!\!\!\!\!\! \gamma_{w}^{1}(H_{i},v_{i}) & \! = &\! \gamma_{w}^{1}(G_{i},v_{i})+\min\big\{ \gamma_{w}^{1}(H_{i-1},v_{i-1}),\gamma_{w}^{00}(H_{i-1},v_{i-1})\big\} \label{eq:pathlike:1} \\
      \!\!\!\!\!\!\! \gamma_{w}^{0}(H_{i},v_{i}) & \! = &\! \min\big\{\gamma_{w}^{0}(G_{i},v_{i})+\gamma_{w}(H_{i-1}), \gamma_{w}^{00}(G_{i},v_{i})+\gamma_{w}^{1}(H_{i-1},v_{i-1})\big\} \label{eq:pathlike:0} \\
      \!\!\!\!\!\!\! \gamma_{w}(H_{i}) & \! = &\! \min\big\{ \gamma_{w}^{1}(H_{i},v_{i}), \gamma_{w}^{0}(H_{i},v_{i}) \label{eq:pathlike:WDN} \big\} \,.
   \end{eqnarray}
   \label{lemma:pathlike}
\end{lemma}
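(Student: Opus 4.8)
The plan is to prove the four equations by induction on $i$, reducing each step to the two-graph case already handled in Lemma \ref{lemma:WDNofGRAPH2vertices}. The key observation is that for $i>1$ the graph $H_i$ is precisely the disjoint union of $H_{i-1}$ (rooted at $v_{i-1}$) and $G_i$ (rooted at $v_i$), joined by the single edge $v_{i-1}v_i$. That is exactly the situation of Lemma \ref{lemma:WDNofGRAPH2vertices} with $G_1 := G_i$ playing the role of the ``near'' graph whose root $v_i$ is the root of the whole union, and $G_2 := H_{i-1}$ playing the role of the ``far'' graph.

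First I would spell out this identification: $V(H_i) = V(G_i) \cup V(H_{i-1})$, these vertex sets are disjoint, and the only edge joining them is $v_{i-1}v_i$; the root of $H_i$ is taken to be $v_i$. Then I would apply Lemma \ref{lemma:WDNofGRAPH2vertices} verbatim with the substitution $(G, G_1, v_1, G_2, v_2) \mapsto (H_i, G_i, v_i, H_{i-1}, v_{i-1})$. Part 1 of that lemma gives $\gamma_w^{00}(H_i,v_i) = \gamma_w^{00}(G_i,v_i) + \gamma_w(H_{i-1})$, which is \eqref{eq:pathlike:00}. Part 2 gives $\gamma_w^{1}(H_i,v_i) = \gamma_w^{1}(G_i,v_i) + \min\{\gamma_w^{1}(H_{i-1},v_{i-1}), \gamma_w^{00}(H_{i-1},v_{i-1})\}$, which is \eqref{eq:pathlike:1}. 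Part 3 gives $\gamma_w^{0}(H_i,v_i) = \min\{\gamma_w^{0}(G_i,v_i) + \gamma_w(H_{i-1}), \gamma_w^{00}(G_i,v_i) + \gamma_w^{1}(H_{i-1},v_{i-1})\}$, which is \eqref{eq:pathlike:0}. (Here I should double-check the orientation: in Lemma \ref{lemma:WDNofGRAPH2vertices} part 3 the term $\gamma_w(G_2,v_2)$ appears, which must be read as $\gamma_w(G_2)$ since $\gamma_w$ takes one argument; with $G_2 = H_{i-1}$ this is $\gamma_w(H_{i-1})$, matching \eqref{eq:pathlike:0}.) Finally \eqref{eq:pathlike:WDN} is just the instance $\gamma_w(H_i) = \min\{\gamma_w^1(H_i,v_i), \gamma_w^0(H_i,v_i)\}$ of the general identity \eqref{eq:WDNGu} applied to the graph $H_i$ with distinguished vertex $v_i$.

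Strictly speaking no induction is even needed for the four displayed equations themselves, since each is a direct consequence of Lemma \ref{lemma:WDNofGRAPH2vertices} applied once to the decomposition $H_i = G_i \cup H_{i-1}$; the inductive flavour only enters if one wants to unfold the recursion down to the base case $H_1 = G_1$. So the proof is essentially a verification that the hypotheses of Lemma \ref{lemma:WDNofGRAPH2vertices} hold for this decomposition, followed by transcription of its conclusions.

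The only thing that needs genuine care — and the place I expect a careless reader to stumble — is keeping the two roots straight: $H_i$ is rooted at $v_i$, and in the decomposition the summand carrying that root ($G_i$) must be the \emph{first} argument of Lemma \ref{lemma:WDNofGRAPH2vertices}, while the accumulated prefix $H_{i-1}$, rooted at the \emph{other} endpoint $v_{i-1}$ of the connecting edge, must be the second. Getting this backwards would swap the roles of $\gamma_w^1$ and $\gamma_w^{00}$ in the $\gamma_w^{00}$ and $\gamma_w^0$ formulas. Once this bookkeeping is fixed the result is immediate, so I would present the proof in two or three lines: state the decomposition, invoke Lemma \ref{lemma:WDNofGRAPH2vertices} and \eqref{eq:WDNGu}, done.
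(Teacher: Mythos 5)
Your proposal is correct and matches the paper's own proof: the paper likewise views $H_i$ as the disjoint union of $G_i$ (rooted at $v_i$) and $H_{i-1}$ (rooted at $v_{i-1}$) joined by the edge $v_{i-1}v_i$, and applies Lemma \ref{lemma:WDNofGRAPH2vertices} directly, with the last equation being the general identity for $\gamma_w$. Your extra care about which summand plays the role of the first argument is exactly the right bookkeeping, and no induction is needed, just as you observe.
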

\begin{proof}
   Look at the graph $H_{i}$ as the disjoint union of subgraphs $G_{i}$ and $H_{i-1}$ with roots $v_{i}$ and $v_{i-1}$ respectively and joined by the edge $v_{i-1}v_{i}$. These are exactly the assumptions of Lemma \ref{lemma:WDNofGRAPH2vertices}.
\end{proof}
\begin{algorithm}[H]
 \caption{PATH-LIKE}
 \begin{algorithmic}
 \State
 {\bf Data:} a path-like cactus $(P,r)$ with the DFS ordered path's vertices and corresponding parameters $\gamma_{w}^{00}$, $\gamma_{w}^{1}$, $\gamma_{w}^{0}$ and $\gamma_{w}$ (i.e. WDP and WDN of rooted subgraphs $(G_{i},v_{i})$ as is shown in Figure \ref{fig:GraphsOnPath}) \\
   {\bf set} $v$ is the last vertex in the DFS order; \\
   {\bf repeat}
      \begin{algorithmic}
      \State $w=\father(v)$; \\
      $\gamma_{w}^{00}(w) = \gamma_{w}^{00}(w)+\gamma_{w}(v)$; \\
      $\gamma_{w}^{1}(w) = \gamma_{w}^{1}(w) + \min\{\gamma_{w}^{1}(v),\gamma_{w}^{00}(v)\}$; \\
      $\gamma_{w}^{0}(w) = \min\{ \gamma_{w}^{0}(w)+\gamma_{w}(v), \gamma_{w}^{00}(w)+\gamma_{w}^{1}(v)\}$; \\
      $\gamma_{w}(w) = \min\{\gamma_{w}^{1}(w), \gamma_{w}^{0}(w) \}$; \\
      $v=w$;
      \end{algorithmic}
   {\bf until} {$v=r$.} \\
   {\bf Result:} $\gamma_{w}^{*}(P,r)=\gamma_{w}^{*}(v)$ \, for \, $*=00,1,0$;
   \begin{algorithmic}
           \State \qquad\quad $\gamma_{w}(P) = \gamma_{w}(v)$.
   \end{algorithmic}
   \end{algorithmic}
 \label{algorithm:PathLikeCactus}
\end{algorithm}
\begin{proposition}[(Time complexity of PATH-LIKE)]
   Algorithm PATH-LIKE needs $4(n-1)$ additions and $3(n-1)$ $\min$-operations.
\end{proposition}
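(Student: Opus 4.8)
The plan is to mirror the accounting already done for Algorithm TREE in Proposition~\ref{prop:TimeComplexityTREE}, since the body of the \textbf{repeat} loop in Algorithm PATH-LIKE is line-for-line identical to that of TREE, with the sole difference that PATH-LIKE carries no recursive call (the rooted subgraphs $(G_i,v_i)$ arrive with their WDP/WDN already tabulated). First I would observe that the correctness of the update lines is exactly the content of Lemma~\ref{lemma:pathlike}, whose proof reduces to Lemma~\ref{lemma:WDNofGRAPH2vertices} by viewing $H_i$ as the two-rooted-graph join of $G_i$ and $H_{i-1}$ along the edge $v_{i-1}v_i$; so the loop faithfully realizes equations \eqref{eq:pathlike:00}--\eqref{eq:pathlike:WDN}.

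Next I would count the arithmetic in one pass of the loop body, for a fixed edge $(w,v)$ with $w=\father(v)$. The line for $\gamma_w^{00}(w)$ costs one addition; the line for $\gamma_w^{1}(w)$ costs one addition and one $\min$; the line for $\gamma_w^{0}(w)$ costs two additions and one $\min$; and the line for $\gamma_w(w)$ costs one $\min$. This totals $4$ additions and $3$ $\min$-operations per iteration, matching the per-edge cost recorded for TREE.

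Finally I would note that, since the path underlying the path-like cactus has $n$ vertices, it has exactly $n-1$ edges $v_1v_2,\dots,v_{n-1}v_n$, and the \textbf{repeat} loop executes its body once for each of them (starting at the last vertex in DFS order and ascending via $\father$ until reaching $r$). Summing the per-iteration cost over the $n-1$ iterations gives $4(n-1)$ additions and $3(n-1)$ $\min$-operations, which is the claim. There is essentially no obstacle here: the only point needing a word of care is that the initialization lines assign constants and so contribute no arithmetic, and that the final \textbf{Result} assignments are mere read-offs; I would state this explicitly so that the count is not off by an additive term.
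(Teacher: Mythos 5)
Your count is correct and follows essentially the same route as the paper, which simply tallies the operations in equations \eqref{eq:pathlike:00}--\eqref{eq:pathlike:WDN}: $4$ additions and $3$ $\min$-operations per loop iteration, over the $n-1$ edges of the path. Your extra remarks on correctness via Lemma~\ref{lemma:pathlike} and on the cost-free initialization are fine but not needed beyond the paper's one-line argument.
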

\begin{proof}
   By counting all operations in (\ref{eq:pathlike:00}), (\ref{eq:pathlike:1}), (\ref{eq:pathlike:0}) and (\ref{eq:pathlike:WDN}), the proposition follows.
\end{proof}

\subsection{$D$-closed path-like cactus}

Let $\{v_{1},\ldots,v_{n}\}$ be a path and $(G_{1},v_{1}),\ldots,(G_{n},v_{n})$ disjoined rooted graphs. We require that both $v_{1}$ and $v_{n}$ are members of a dominating set. Such a graph $G$ is drawn on Figure \ref{fig:GraphsOnDclosedPath}.
\begin{figure}[h]
   \centering
      \includegraphics[scale=1.1]{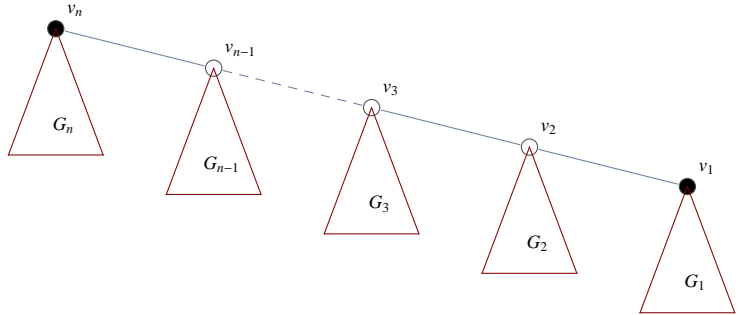}\\
      \caption{{\small D-closed path-like cactus}}
      \label{fig:GraphsOnDclosedPath}
\end{figure}
To calculate the weighted domination parameters and the weighted domination number with the condition that $v_{1} \in D$, we introduce some additional  notation:
(for $u \ne v_{1}$)
\begin{eqnarray*}
   \gamma_{w,v_{1}}(G) & = & \gamma_{w}^{1}(G,v_{1}) \\
    \gamma_{w,v_{1}}^{00}(G,u) & = & \gamma_{w,v_{1}}(G-u) \\
   \gamma_{w,v_{1}}^{1}(G,u) & = & \min\big\{ w(D) \,\big| \; \{u,v_{1}\} \subseteq D \big\} \\
   \gamma_{w,v_{1}}^{0}(G,u) & = & \min\big\{ w(D) \,\big| \; v_{1} \in D, u \notin D \big\} \,.
\end{eqnarray*}

In the new algorithm for calculating the WDN of a $D$-closed path-like cactus we have to provide that the first vertex $(v_{1})$ is a member of a dominating set.
We apply algorithm PATH-LIKE and make changes in the first step of the loop {\bf repeat-until}. According to the Figure \ref{fig:GraphsOnPath} and Figure \ref{fig:GraphsOnDclosedPath}, that means
\begin{eqnarray}
      \gamma_{w,v_{1}}^{00}(H_{2},v_{2}) & = & \gamma_{w}^{00}(G_{2},v_{2})+\gamma_{w}^{1}(H_{1},v_{1}) \\
      \gamma_{w,v_{1}}^{1}(H_{2},v_{2}) & = & \gamma_{w}^{1}(G_{2},v_{2})+\gamma_{w}^{1}(H_{1},v_{1}) \\
      \gamma_{w,v_{1}}^{0}(H_{2},v_{2}) & = & \gamma_{w}^{00}(G_{2},v_{2})+\gamma_{w}^{1}(H_{1},v_{1}) \,.
\end{eqnarray}
The other steps do not need corrections and the vertex $v_{1}$ on Figure \ref{fig:GraphsOnDclosedPath} (the last vertex in the DFS order in the algorithm below) on a path remains in a dominating set.

\begin{algorithm}[H]
 \caption{$D$-CLOSED PATH-LIKE}
 \begin{algorithmic}
 \State
 {\bf Data:} {$D$-closed path-like cactus $(P,r)$: with DFS ordered path's vertices and corresponding parameters $\gamma_{w}^{00}$, $\gamma_{w}^{1}$,
 $\gamma_{w}^{0}$ and $\gamma_{w}$, i.e. WDP and WDN of rooted subgraphs $(G_{i},v_{i})$} \\
   {\bf set} $v$ is the last vertex in DFS order; \\
   $l=v$; \\
   $w=\father(v)$;\\
   $\gamma_{w}^{00}(w) = \gamma_{w}^{00}(w)+\gamma_{w}^{1}(v)$; \\
   $\gamma_{w}^{1}(w) = \gamma_{w}^{1}(w) + \gamma_{w}^{1}(v)$; \\
   $\gamma_{w}^{0}(w) = \gamma_{w}^{00}(w)+\gamma_{w}^{1}(v)$; \\
   $\gamma_{w}(w) = \min\{\gamma_{w}^{1}(w), \gamma_{w}^{0}(w) \}$; \\
   $v=w$;\\
   {\bf repeat}  \\
      \quad $w=\father(v)$; \\
      \quad $\gamma_{w}^{00}(w) = \gamma_{w}^{00}(w)+\gamma_{w}(v)$; \\
      \quad $\gamma_{w}^{1}(w) = \gamma_{w}^{1}(w) + \min\{\gamma_{w}^{1}(v),\gamma_{w}^{00}(v)\}$; \\
      \quad $\gamma_{w}^{0}(w) = \min\{ \gamma_{w}^{0}(w)+\gamma_{w}(v), \gamma_{w}^{00}(w)+\gamma_{w}^{1}(v)\}$; \\
      \quad $\gamma_{w}(w) = \min\{\gamma_{w}^{1}(w), \gamma_{w}^{0}(w) \}$; \\
      \quad $v=w$; \\
   {\bf until} {$v=r$.} \\
   {\bf Result:} ${\gamma}_{w,l}^{*}(P,r)=\gamma_{w}^{*}(v)$ \, for \, $*=00,1,0$; \\
            \qquad \qquad ${\gamma}_{w,l}(P) = \gamma_{w}(v)$.
 \end{algorithmic}
 \label{algorithm:DclosedPathLikeCactus}
\end{algorithm}

\begin{proposition}[(Time complexity of $D$-CLOSED PATH-LIKE)] \,
   Al\-gorithm $D$-CLOSED PATH-LIKE needs less than $4(n-1)$ additions and $3(n-1)$ $\min$-operations.
\end{proposition}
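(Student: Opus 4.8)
The plan is to obtain the bound by directly counting the arithmetic operations performed by Algorithm $D$-CLOSED PATH-LIKE, splitting the count into the single pre-loop step and the iterations of the \textbf{repeat-until} loop.

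First I would tally the operations in the initialization step that processes the last vertex $l$ in the DFS order together with its father $w$. Reading off the pseudocode, the three assignments for $\gamma_{w}^{00}(w)$, $\gamma_{w}^{1}(w)$ and $\gamma_{w}^{0}(w)$ — these realize the modified formulas that enforce $v_{1}\in D$ — each require exactly one addition, while $\gamma_{w}(w)=\min\{\gamma_{w}^{1}(w),\gamma_{w}^{0}(w)\}$ requires exactly one $\min$-operation. Hence this step costs $3$ additions and $1$ $\min$-operation; it is strictly cheaper than a generic PATH-LIKE step precisely because, under the constraint, the two minima in the ordinary $\gamma_{w}^{1}$- and $\gamma_{w}^{0}$-recursions collapse.

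Next I would bound the cost of the loop body. Each iteration of the \textbf{repeat-until} loop executes the same four assignments as Algorithm PATH-LIKE, so by the count already carried out in the proof of the time complexity of PATH-LIKE (equivalently, by counting the operations appearing in Lemma~\ref{lemma:pathlike}) each iteration costs $4$ additions and $3$ $\min$-operations. Since the path underlying the $D$-closed path-like cactus has $n$ vertices and hence $n-1$ edges, and the pre-loop step already accounts for one of these edges, the loop is executed exactly $n-2$ times. Summing, Algorithm $D$-CLOSED PATH-LIKE performs
\[
   3 + 4(n-2) = 4n-5 < 4(n-1)
\]
additions and
\[
   1 + 3(n-2) = 3n-5 < 3(n-1)
\]
$\min$-operations, which is exactly the claimed bound.

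The argument involves no genuine difficulty; the only point needing care is the bookkeeping that the special pre-loop step replaces exactly one PATH-LIKE iteration (so the loop runs $n-2$, not $n-1$, times), together with the observation that this special step is strictly cheaper than an ordinary step — this is what makes the two stated ``less than'' inequalities hold rather than equalities.
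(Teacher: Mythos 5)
Your count is correct and matches the paper's own argument: the pre-loop step costs $3$ additions and $1$ $\min$-operation, and each of the $n-2$ remaining loop iterations costs $4$ additions and $3$ $\min$-operations, giving $4n-5<4(n-1)$ additions and $3n-5<3(n-1)$ $\min$-operations. This is essentially the same proof as in the paper, just spelled out in slightly more detail (the paper states the same first-step cost and loop cost without writing out the final sums).
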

\begin{proof}
   In the first step of the algorithm we have $3$ additions and one $\min$-operation. For the loop we need $4(n-2)$ additions and $3(n-2)$ $\min$-operations.
\end{proof}

\subsection{Cycle-like cactus}

We now consider the case when the specific vertices $v_{1},\ldots,v_{n}$ in a graph are vertices of a cycle.
 Let $(C_{n},v_{n})$ be a rooted cycle with vertices $v_{1},\ldots,v_{n}$ and corresponding weights $w_{1},\ldots,w_{n}$, and let $(G_{1},v_{1}),\ldots,(G_{n-1},v_{n-1})$ be disjoined rooted graphs (in our case cacti).
Denote by  $(K_{n},v_{n})$  the union of $(G_{1},v_{1}),\ldots$ $\ldots,(G_{n-1},v_{n-1})$ and $v_{n}$,
  joined by the edges $v_{1}v_{2},v_{2}v_{3},\ldots,v_{n-1}v_{n}$ and $v_{n}v_{1}$.
Graph $(K_{n},v_{n})$ is depicted  in Figure \ref{fig:GraphsOnCycle}.

\begin{figure}[h]
   \centering
      \includegraphics[scale=0.6]{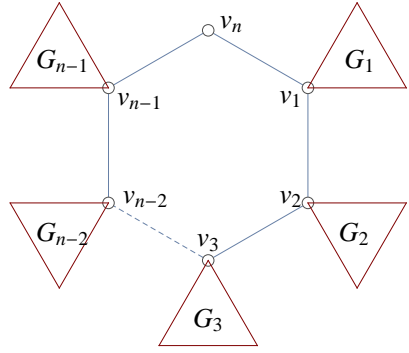}\\
      \caption{{\small Cycle-like cactus}}
      \label{fig:GraphsOnCycle}
\end{figure}
Let $(K_{n}',v_{n})$ be the path-like cactus with specific vertices on the weighted path $\{v_{n}',v_{1},\ldots,v_{n}\}$, where we additionally define $w(v_{n}')=w_{n}$. Graph $(K_{n}',v_{n})$ is obtained from $(K_{n},v_{n})$ as is shown  in Figure \ref{fig:GraphsOnCycleDrugic}.
\begin{figure}[h]
   \centering
      \includegraphics[scale=1.4]{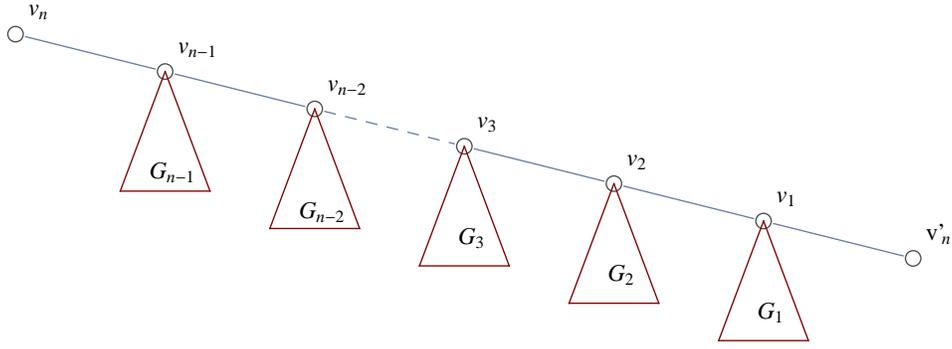}\\
      \caption{{\small Graph $(K_{n}',v_{n})$ obtained from $(K_{n},v_{n})$}}
      \label{fig:GraphsOnCycleDrugic}
\end{figure}

\begin{lemma}
   For a rooted cycle-like cactus $(K_{n},v_{n})$, the weighted domination parameters and the weighted domination number are the following
   \begin{eqnarray}
      \gamma_{w}^{00}(K_{n},v_{n}) & = & \gamma_{w}(K_{n}-v_{n}) \label{eq:cyclelike:00} \\
      \gamma_{w}^{1}(K_{n},v_{n}) & = & \gamma_{w,v_{n}'}^{1}(K_{n}',v_{n})-w_{n} \label{eq:cyclelike:1} \\
      \gamma_{w}^{0}(K_{n},v_{n}) & = & \min\big\{ \gamma_{w}^{1}(K_{n}-v_{n},v_{1}),\gamma_{w}^{1}(K_{n}-v_{n},v_{n-1}) \big\} \label{eq:cyclelike:0} \\
      \gamma_{w}(K_{n}) & = & \min\big\{ \gamma_{w}^{1}(K_{n},v_{n}), \gamma_{w}^{0}(K_{n},v_{n}) \big\} \label{eq:cyclelike:WDN} \,.
   \end{eqnarray}
\end{lemma}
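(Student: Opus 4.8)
The plan is to dispatch the four identities in turn, noting that the content lies entirely in (\ref{eq:cyclelike:1}) and (\ref{eq:cyclelike:0}). Equation~(\ref{eq:cyclelike:00}) is just the definition $\gamma_{w}^{00}(G,v)=\gamma_{w}(G-v)$ with $G=K_{n}$, $v=v_{n}$, and (\ref{eq:cyclelike:WDN}) is (\ref{eq:WDNGu}) with the same substitution; neither needs an argument. The two nontrivial identities both rest on the observation that in $K_{n}$ the root $v_{n}$ has exactly the two neighbours $v_{1}$ and $v_{n-1}$ (its two cycle-neighbours), and that for every vertex $u\notin\{v_{1},v_{n},v_{n}'\}$ the closed neighbourhoods in $K_{n}$, in $K_{n}-v_{n}$ (which is a path-like cactus), and in $K_{n}'$ all coincide.

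For (\ref{eq:cyclelike:1}) I would set up a correspondence, weight-preserving up to the additive constant $w_{n}$, between the dominating sets $D$ of $K_{n}$ with $v_{n}\in D$ and the dominating sets $D'$ of $K_{n}'$ with $\{v_{n},v_{n}'\}\subseteq D'$. Passing from $K_{n}$ to $K_{n}'$ separates the two cycle-edges at $v_{n}$: the copy $v_{n}'$ inherits the edge to $v_{1}$ while $v_{n}$ keeps the edge to $v_{n-1}$, and nothing else changes. Given such a $D$, I claim $D'=D\cup\{v_{n}'\}$ dominates $K_{n}'$, because $v_{n}'$ dominates $\{v_{n}',v_{1}\}$, $v_{n}$ dominates $\{v_{n},v_{n-1}\}$, and every other vertex keeps the dominator it had in $K_{n}$; since $v_{n}'\notin V(K_{n})$ this gives $w(D')=w(D)+w_{n}$, hence $\gamma_{w,v_{n}'}^{1}(K_{n}',v_{n})\le\gamma_{w}^{1}(K_{n},v_{n})+w_{n}$. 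Conversely, from a minimum $D'$ for $\gamma_{w,v_{n}'}^{1}(K_{n}',v_{n})$ I put $D=D'\setminus\{v_{n}'\}$; then $v_{n}\in D$ already dominates both $v_{1}$ and $v_{n-1}$ in $K_{n}$, and $v_{n}'$ lay in no closed neighbourhood of $K_{n}'$ except those of $v_{n}'$ and $v_{1}$, so $D$ dominates $K_{n}$, giving $\gamma_{w}^{1}(K_{n},v_{n})\le\gamma_{w,v_{n}'}^{1}(K_{n}',v_{n})-w_{n}$. The two inequalities yield (\ref{eq:cyclelike:1}).

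For (\ref{eq:cyclelike:0}) I would first establish the characterisation: $D$ is a dominating set of $K_{n}$ with $v_{n}\notin D$ if and only if $D$ is a dominating set of $K_{n}-v_{n}$ with $v_{1}\in D$ or $v_{n-1}\in D$. The forward direction uses $N[v_{n}]=\{v_{n},v_{1},v_{n-1}\}$, so $v_{n}\notin D$ forces $v_{1}\in D$ or $v_{n-1}\in D$, together with the fact that, as $v_{n}\notin D$, deleting $v_{n}$ leaves $N[u]\cap D$ unchanged for every $u\ne v_{n}$; the converse is immediate because re-inserting $v_{n}$ only enlarges neighbourhoods and $v_{n}$ is then dominated by $v_{1}$ or $v_{n-1}$. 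Minimising the weight over this family, and writing the disjunction as the union of the subfamily with $v_{1}\in D$ and the subfamily with $v_{n-1}\in D$, turns the minimum into $\min\{\gamma_{w}^{1}(K_{n}-v_{n},v_{1}),\gamma_{w}^{1}(K_{n}-v_{n},v_{n-1})\}$, which is (\ref{eq:cyclelike:0}); and then (\ref{eq:cyclelike:WDN}) follows from (\ref{eq:WDNGu}).

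The only place where care is needed is the vertex-by-vertex checking of closed neighbourhoods when passing between $K_{n}$, $K_{n}-v_{n}$, and $K_{n}'$. The main obstacle, such as it is, is making the correspondence in (\ref{eq:cyclelike:1}) weight-exact — confirming that the auxiliary copy $v_{n}'$ (whose weight is defined to be $w_{n}$) contributes exactly $w_{n}$ and that no vertex loses its dominator when $v_{n}'$ is inserted into or deleted from a dominating set. Once that is settled, the remaining equalities are formal.
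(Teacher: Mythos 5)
Your proof is correct and takes essentially the same route as the paper: the paper's own proof is only a one-line appeal to the construction of $K_{n}'$, the definition of $\gamma_{w,v_{n}'}^{1}$, and the basic properties of $\gamma_{w}^{00}$ and $\gamma_{w}^{0}$, and your argument simply makes explicit the dominating-set correspondences (cutting the cycle at $v_{n}$, and the case analysis on which cycle-neighbour dominates $v_{n}$) that this appeal implicitly relies on. No gaps.
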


\begin{proof}
 Lemma is a direct  consequence of the construction of the graph $K'$, the definition of $\gamma_{w,v_{n}'}^{1}$ and the properties of the parameters $\gamma_{w}^{0}$ and $\gamma_{w}^{00}$.

Recall that  weighted domination parameters $\gamma_{w}^{00}(K_{n},v_{n})$, $\gamma_{w}^{1}(K_{n},v_{n})$ and $\gamma_{w}^{0}(K_{n},v_{n})$ can be calculated using the previous two lemmas.
\end{proof}
\begin{algorithm}[H]
   \caption{CYCLE-LIKE}
   \begin{algorithmic}
   \State
   {\bf Data:} {a rooted cycle $(C,r)$: with DFS ordered vertices and corresponding parameters $\gamma_{w}^{00}$, $\gamma_{w}^{1}$, $\gamma_{w}^{0}$ and $\gamma_{w}$, i.e. WDP and WDN of rooted subgraphs $(G_{i},v_{i})$} \\
   {\bf set} \, $v$ is the last vertex in DFS order; \\
    \quad \; \; $r$ is the first vertex in DFS order; \\
    \quad \; \; $s=\successor(v)$; \\
    \quad \; \; $l=v$; \\
    {\bf set} new vertex $r'$ in the DFS table with:  \\
    \qquad \ $\dfn(r')=\dfn(v)+1$; \\
    \qquad \ $\father(r')=v$; \\
    \qquad \ ($\mmark(r')=1$, $\rooot(r')=r'$, $\successor(r')=r'$, $\ind(r')=1$); \\
    \qquad \ $\gamma_{w}^{00}(r')=\gamma_{w}^{00}(r)$, \ $\gamma_{w}^{1}(r')=\gamma_{w}^{1}(r)$, \ $\gamma_{w}^{0}(r')=\gamma_{w}^{0}(r)$ \ and \ $\gamma_{w}(r')=\gamma_{w}(r)$. \\
    {{\bf calculate}
     \begin{enumerate}
      \item [$\bullet$] $\gamma_{w}(C-r)$ and $\gamma_{w}^{1}(C-r,s)$ using PATH-LIKE on DFS ordered path's vertices $\{s,\ldots,v\}$
      \item [$\bullet$] ${\gamma}_{w,r'}^{1}(C \cup \{r'\},r)$ using D-CLOSED PATH-LIKE on DFS ordered path's vertices $\{r,s,\ldots,v,r'\}$
      \item [$\bullet$] ${\gamma}_{w,v}(C-r)$ using D-CLOSED PATH-LIKE ($v \in D$) on DFS ordered path's vertices $\{s,\ldots,v\}$
      \end{enumerate}
    }
    {\bf Result:}  $\gamma_{w}^{00}(C,r)=\gamma_{w}(C-r)$; \\
     \qquad \qquad \quad \; \; $\gamma_{w}^{1}(C,r)={\gamma}_{w,r'}^{1}(C \cup \{r'\},r)-\gamma_{w}^{1}(r)$; \\
     \qquad \qquad \quad \; \; $\gamma_{w}^{0}(C,r)=\min\big\{\gamma_{w}^{1}(C-r,s),{\gamma}_{w,v}(C-r)\big\}$; \\
     \qquad \qquad \quad \; \; $\gamma_{w}(C)=\min\big\{ \gamma_{w}^{1}(C,r),\gamma_{w}^{0}(C,r)\big\}$.
 \end{algorithmic}
 \label{algorithm:Cycle-likeCactus}
\end{algorithm}

\begin{proposition}[(Time complexity of CYCLE-LIKE)] $ $ \\
   If a cycle $C$ has $n$ vertices,
   the algorithm CYCLE-LIKE needs less than $12(n-1)$ additions and $9(n-1)$ $\min$-operations.
\end{proposition}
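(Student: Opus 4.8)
The plan is to count the arithmetic operations (additions and $\min$-operations) performed by each of the three subroutine calls inside CYCLE-LIKE, plus the few extra operations performed directly in the Result lines, and then sum them with the bound $n-1$ (or smaller) replacing the vertex count in each sub-call. First I would observe that the cycle $C$ has $n$ vertices $r,s=v_2,\dots,v_n=l$, so the path $\{s,\dots,v\}$ on which the first PATH-LIKE call operates has $n-1$ vertices, the $D$-closed path $\{r,s,\dots,v,r'\}$ on which the $D$-CLOSED PATH-LIKE call operates has $n+1$ vertices, and the $D$-closed path $\{s,\dots,v\}$ on which the last ($v\in D$) $D$-CLOSED PATH-LIKE call operates has $n-1$ vertices.

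Next I would invoke the time-complexity propositions already proved: PATH-LIKE on $m$ vertices uses $4(m-1)$ additions and $3(m-1)$ $\min$-operations, and $D$-CLOSED PATH-LIKE on $m$ vertices uses strictly fewer than $4(m-1)$ additions and at most $3(m-1)$ $\min$-operations. Substituting $m=n-1$, $m=n+1$, $m=n-1$ respectively, the three calls together use fewer than $4(n-2)+4n+4(n-2)=12n-16$ additions and at most $3(n-2)+3n+3(n-2)=9n-12$ $\min$-operations. I would then add the operations appearing explicitly in the Result block of CYCLE-LIKE: the line $\gamma_w^1(C,r)=\gamma_{w,r'}^1(C\cup\{r'\},r)-\gamma_w^1(r)$ is one subtraction (one "addition"), and the lines for $\gamma_w^0(C,r)$ and $\gamma_w(C)$ are one $\min$-operation each, so two more $\min$-operations; the line $\gamma_w^{00}(C,r)=\gamma_w(C-r)$ is just an assignment. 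This gives fewer than $12n-16+1<12(n-1)$ additions and at most $9n-12+2<9(n-1)$ $\min$-operations, which is the claimed bound. (The copying of the DFS-table entries and of the four $\gamma$-parameters for the auxiliary vertex $r'$ involves no arithmetic and so is not counted.)

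The main obstacle I anticipate is bookkeeping rather than depth: I must be careful that the three sub-calls operate on paths of the correct lengths — in particular that inserting the extra vertex $r'$ genuinely lengthens the relevant path by one, so the $D$-CLOSED PATH-LIKE call costs $4n$ rather than $4(n-1)$ additions — and that no sub-path is double-counted, since the path $\{s,\dots,v\}$ is traversed twice (once by PATH-LIKE and once by the last $D$-CLOSED PATH-LIKE call) and both traversals must be included in the total. A secondary point to handle cleanly is the strict-versus-nonstrict inequalities: the strictness already present in the $D$-CLOSED PATH-LIKE bound, together with the constant slack between $12n-16+1$ and $12(n-1)=12n-12$ and between $9n-12+2$ and $9(n-1)=9n-9$, comfortably absorbs the extra Result-line operations, so the final "less than" is justified without any delicate estimate. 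Once the lengths are pinned down the summation is routine.
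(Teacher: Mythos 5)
Your proposal is correct and follows essentially the same route as the paper: it counts the operations of the three sub-calls (PATH-LIKE on $n-1$ vertices, D-CLOSED PATH-LIKE on $n+1$ and on $n-1$ vertices) using the previously established complexity propositions, adds the one addition and two $\min$-operations from the Result block, and arrives at the same totals $12n-15 < 12(n-1)$ and $9n-10 < 9(n-1)$. No gaps; your bookkeeping of the path lengths (including the extra vertex $r'$) matches the paper's.
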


\begin{proof} Using algorithms {PATH-LIKE} and {D-CLOSED PATH-LIKE} we obtain:
   \begin{enumerate}
      \item [$\bullet$] For calculating the parameters $\gamma_{w}(C-r)$ and $\gamma_{w}^{1}(C-r,s)$ using {PATH-LIKE} algorithm on $n-1$ vertices, we need $4(n-2)$ additions and $3(n-2)$ $\min$-operations.
      \item [$\bullet$] For calculating the parameter $\overline{\gamma}_{w}^{1}(C \cup \{r'\},r)$ using D-CLOSED PATH-LIKE algorithm on $n+1$ vertices, we need $4n$ additions and $3n$ $\min$-operations.
      \item [$\bullet$] For calculating $\overline{\gamma}_{w}(C-r)$ using D-CLOSED PATH-LIKE algorithm on $n-1$ vertices, we need $4(n-2)$ additions and $3(n-2)$ $\min$-operations.
   \end{enumerate}
   Additionally, at the end of the algorithm, we need one addition and two $\min$-operations. Adding up all operations, we confirm
   \begin{equation}
      4(n-2)+4n+4(n-2)+1 = 12n-15 < 2(n-1)
   \end{equation}
   additions and
   \begin{equation}
      3(n-2)+3n+3(n-2)+2 = 9n-10 < 9(n-1)
   \end{equation}
   $\min$-operations.
\end{proof}

\section{Algorithm for weighted domination of cacti}

\label{section:summary}

As we indicated  in the previous sections, the general  algorithm for calculating   WDN of a cactus graph can be
seen as an   upgrade of the algorithm TREE. The input data of the main algorithm is the DFS cactus  data structure, which is supplemented by four arrays of the initial values of the parameters $\gamma_{w}^{00}$, $\gamma_{w}^{1}$, $\gamma_{w}^{0}$ and $\gamma_{w}$ for every vertex.
The starting point (vertex) of the algorithm is the last unread vertex in the DFS order. If the last vertex lies on a tree, we proceed  like in the algorithm TREE. (Some care must be taken for the root of the tree, to correct its parameters following Lemma \ref{lemma:WDNofGRAPH1vertex}.)
However, if the last vertex lies on a cycle, we have to read and remember all cycle's vertices.
Following Remark \ref{remark:subcactus} and Proposition \ref{prop:last vertex cycle}, the algorithm calls itself for rooted subcacti,  for which the roots are hinges of the cycle.
This forces the hinges to obtain new values of parameters where all subcacti rooted at the hinges are considered.
 Then the algorithm calls subalgorithm CYCLE-LIKE and applies Lemma \ref{lemma:WDNofGRAPH1vertex} to correct the values of the parameters of the root of cycle.
The algorithm continues until the last unread vertex in the DFS order is the root of the cactus.
Pseudocode is given below (Algorithm \ref{AlgCactus}). \\

\begin{algorithm}
   \caption{CACTUS}
   \begin{algorithmic}
   \State
   {\bf Data:} A rooted cactus $(K,r)$ with DFS ordered vertices in the DFS table; \\
   {\bf initialize} $\gamma_{w}^{00}(v)=0$, $\gamma_{w}^{1}(v)=w(v)$, $\gamma_{w}^{0}(v)=\infty$ and $\gamma_{w}(v)=w(v)$ for 
                  every vertex $v$ in the DFS table; \\
 {\bf set} $v$ is the last vertex in the DFS order \\
    \qquad $\last = v$; \\
 {\bf while} $\last \ne r$ {\bf do}
       \begin{algorithmic}
       \If{ $\last$ {\rm does not lie on a cycle}} \\
            \quad {\bf repeat} \\
            \quad \,  $w = \father(v)$; \\
            \quad \,  $u = v$;
             \If{($\rooot(w)=w$) and ($\dfn(w) < \dfn(v)-1$)} \\
                        \qquad \quad {{\bf do} CACTUS of the rooted subcactus on vertices in the DFS table \\
                        \qquad \quad with $\dfn=\dfn(w),\ldots,\dfn(v)-1$ and the root $w$ \\
                        \qquad \quad {\it (we get new values $\gamma_{w}^{00}(w)$, $\gamma_{w}^{1}(w)$, $\gamma_{w}^{0}(w)$, $\gamma_{w}(w)$)}
                         }
            \EndIf \\
            \quad \, $\gamma_{w}^{00}(w)=\gamma_{w}^{00}(w)+\gamma_{w}(v)$ ; \\
            \quad \, $\gamma_{w}^{1}(w)=\gamma_{w}^{1}(w)+\min\{\gamma_{w}^{1}(v), \gamma_{w}^{00}(v) \}$ ; \\
            \quad \, $\gamma_{w}^{0}(w)= \min\{\gamma_{w}^{0}(w)+\gamma_{w}(v), \gamma_{w}^{00}(w)+\gamma_{w}^{1}(v) \} $; \\
            \quad \, $\gamma_{w}(w)=\min\{\gamma_{w}^{1}(w),\gamma_{w}^{0}(w) \}$; \\
            \quad \,  $v = w$; \\
            \quad {\bf until} ($\rooot(v)\ne v$) or ($\dfn(v)=0$)
         \Else (see next page)
      \EndIf
      \end{algorithmic}
  \end{algorithmic}
\end{algorithm}

\begin{algorithm}
   \caption{CACTUS - Part 2}
   \begin{algorithmic}
      \State
      \begin{algorithmic}
       \If{$\last$ {\rm does not lie on a cycle}} see previous page
       \Else \\
       \quad $v = \last$ ; \\
       \quad {\bf repeat} {\it (mark the roots of the cycle and correct WDP and WDN arrays \\
                           \qquad \qquad \;\, of hinges on a cycle)} \\
            \quad \,  $w=\father(v)$;
                     \If{$\dfn(w) < \dfn(v)-1$} \\
                         \qquad \quad {{\bf do} CACTUS of rooted subcactus on vertices in the DFS table with \\
                         \qquad \quad $\dfn=\dfn(w),\ldots,\dfn(v)-1$ and the root $w$ \\
                         \qquad \quad{\it(new values $\gamma_{w}^{*}(w)$, $*=00,1,0$ and $\gamma_{w}(w)$)}}
                     \EndIf \\
            \quad \,   $v = w$; \\
       \quad {\bf until} $v=\successor(v)$. \\
       \quad $ u = v $; \\
       \quad $ w= \father(v) $; \ \  {\it ($w$ is now the root of the cycle)} \\
       \quad {\bf Make} cycle table: \\
          \quad \ \ \ $C = \emptyset$; \\
          \quad \ \ \ $v=\last$; \\
          \quad \ \ \ $C = DFS(v)$; \\
          \quad \ \ \ {\bf repeat} \\
          \quad \ \ \ \ \ \ $v = \father(v)$; \\
          \quad \ \ \ \ \ \ $C \cup DFS(v)$; \\
          \quad \ \ \ {\bf until} $v=w$.
          \\
       \quad Do CYCLE-LIKE algorithm on the rooted cycle $(C,v)$ with vertices in \\
       \quad the table $C$. We obtain parameters of the cycle $C$: $\gamma_{w}^{00}(C,v)$, $\gamma_{w}^{1}(C,v)$, \\
       \quad $\gamma_{w}^{0}(C,v)$ and $\gamma_{w}(C)$; \\
       \quad $\gamma_{w}^{00}(w)=\gamma_{w}^{00}(w)+\gamma_{w}^{00}(C,v)$; \\
       \quad $\gamma_{w}^{1}(w)=\gamma_{w}^{1}(w)+\gamma_{w}^{1}(C,v)-\weight(w)$;\\
       \quad $\gamma_{w}^{0}(w)=\min\{\gamma_{w}^{0}(w)+\gamma_{w}^{00}(C,v),\gamma_{w}^{00}(w)+\gamma_{w}^{0}(C,v)\}$ ; \\
       \quad $\gamma_{w}(w)=\min\{\gamma_{w}^{1}(w), \gamma_{w}^{0}(w) \}$;
       \EndIf
       \end{algorithmic}
       \quad$\last$ is determined by $\dfn(\last)=\dfn(u)-1$. \\
   {\bf Result:} ${\gamma}_{w}^{*}(K,r)=\gamma_{w}^{*}(w)$ \, for \, $*=00,1,0$; \\
                 \qquad \qquad ${\gamma}_{w}(K) = \gamma_{w}(w)$.
   \end{algorithmic}
\label{AlgCactus}
\end{algorithm}


\newpage



Denote by $b$ the number of blocks i.e. the total number of cycles and grafts.

\begin{proposition}
   Algorithm CACTUS properly calculates the weighted domination number of a cactus.
\end{proposition}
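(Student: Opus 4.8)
\noindent
The plan is to establish, by strong induction on the number of vertices, an invariant describing the state of the arrays $\gamma_{w}^{00},\gamma_{w}^{1},\gamma_{w}^{0},\gamma_{w}$, and then to read the proposition off from the case of the whole cactus. Precisely, I would prove: whenever \textbf{CACTUS} (Algorithm~\ref{AlgCactus}) is called on a rooted subcactus $(K',r')$ whose vertices form the contiguous $\dfn$‑interval $\{\dfn(r'),\dots,\dfn(r')+|V(K')|-1\}$, then on termination $\gamma_{w}^{*}(r')=\gamma_{w}^{*}(K',r')$ for $*=00,1,0$ and $\gamma_{w}(r')=\gamma_{w}(K')$. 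To make the induction go through I would carry the stronger \emph{loop invariant}: after each pass of a \textbf{repeat} loop the current vertex $v$ stores the weighted domination parameters of the subcactus spanned by $v$ together with all vertices already consumed (those whose $\dfn$ lies in the relevant interval). The base case is a single vertex, and there the initialization $\gamma_{w}^{00}=0$, $\gamma_{w}^{1}=w(v)$, $\gamma_{w}^{0}=\infty$, $\gamma_{w}=w(v)$ is immediately correct.

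For the inductive step I would split according to the test ``$\last$ lies on a cycle''. In the tree‑like branch, at a step with father $w=\father(v)$, Remark~\ref{remark:subcactus} together with the Observation following it shows that the vertices with $\dfn=\dfn(w),\dots,\dfn(v)-1$ form a proper rooted subcactus $(\widetilde K_{w},w)$ disjoint from $\{v\}$; the induction hypothesis then applies to the recursive call and leaves the correct parameters of $\widetilde K_{w}$ at $w$ — this is also where any cycle blocks or grafts of $K'$ sitting entirely to the left of the current DFS path are absorbed. The four updates that follow are literally the formulas of Lemma~\ref{lemma:WDNofGRAPH2vertices} for the cut edge $wv$ (with $\gamma_{w}(w)=\min\{\gamma_{w}^{1}(w),\gamma_{w}^{0}(w)\}$), and iterating them along the path is exactly Lemma~\ref{lemma:pathlike}, so the loop invariant survives down to the vertex at which the loop halts (a cycle vertex, or the root).

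In the cycle branch I would invoke Proposition~\ref{prop:last vertex cycle}: the non‑father neighbour of $l=\last$ on the cycle $C$ is the root of $C$, the vertex $l$ is not a hinge, and every subcactus pendant at a hinge of $C$ occupies $\dfn$‑values strictly below those of the corresponding cycle vertices. Hence the first \textbf{repeat} loop traverses exactly $V(C)$, handling each pendant subcactus by a recursive call (induction hypothesis), so that afterwards each cycle vertex carries the parameters of the subcactus pendant at it. Then \textbf{CYCLE-LIKE} (Algorithm~\ref{algorithm:Cycle-likeCactus}), whose correctness I may assume from equations~(\ref{eq:cyclelike:00})--(\ref{eq:cyclelike:WDN}) — which themselves rest on Lemmas~\ref{lemma:WDNofGRAPH2vertices} and~\ref{lemma:pathlike} and on the subalgorithms~\ref{algorithm:PathLikeCactus} and~\ref{algorithm:DclosedPathLikeCactus} — produces $\gamma_{w}^{00}(C,v)$, $\gamma_{w}^{1}(C,v)$, $\gamma_{w}^{0}(C,v)$, $\gamma_{w}(C)$. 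Finally the cycle block $C$ and the already‑accumulated part of $K'$ at the cycle root $w$ meet only in $w$, with the two pieces minus $w$ disjoint, so Lemma~\ref{lemma:WDNofGRAPH1vertex} applies and the four closing assignments of this branch are exactly its formulas; the invariant is re‑established at $w$.

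To finish, I would observe that each iteration of the outer \textbf{while} loop strictly decreases $\dfn(\last)$, so the algorithm terminates, and it does so with the current vertex equal to $r$; by the invariant $r$ then holds $\gamma_{w}^{*}(K,r)$ for $*=00,1,0$ and $\gamma_{w}(K)=\min\{\gamma_{w}^{1}(r),\gamma_{w}^{0}(r)\}$, which is the weighted domination number by~(\ref{eq:WDNGu}). I expect the main obstacle to be the bookkeeping behind this invariant: checking that every recursive call is genuinely launched on a strictly smaller rooted subcactus occupying a contiguous $\dfn$‑interval (using Remark~\ref{remark:subcactus} and Lemma~\ref{lemma:WDNofGRAPH2vertices}'s hypotheses); that the cycle and its root are correctly recovered from the DFS arrays in the precise sense of Proposition~\ref{prop:last vertex cycle}; and that at a hinge shared by several blocks the accumulation matches Lemma~\ref{lemma:WDNofGRAPH1vertex} while along a cut edge it matches Lemma~\ref{lemma:WDNofGRAPH2vertices}. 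Once these structural facts are pinned down, everything else is routine substitution into the cited lemmas.
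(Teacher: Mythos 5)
Your proposal is correct and follows essentially the same route as the paper: the paper's (very terse) proof rests on exactly your invariant, namely that when the current vertex $v$ is a subtree vertex or a cycle root, the arrays at $v$ hold the WDP and WDN of the subcactus on the vertices with $\dfn \geq \dfn(v)$, so that at $v=r$ one reads off $\gamma_{w}(K)=\gamma_{w}(r)$. Your write-up simply makes explicit the induction, the appeal to Remark~\ref{remark:subcactus}, Proposition~\ref{prop:last vertex cycle} and Lemmas~\ref{lemma:WDNofGRAPH2vertices}--\ref{lemma:WDNofGRAPH1vertex}, which the paper leaves implicit.
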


\begin{proof}
   Recall that by definition we have two essential situations. If the current vertex $v$ is a $G$-vertex on a subtree or a root of a cycle, the algorithm CACTUS calculates the WDP and the WDN of the subcactus of all vertices with $\dfn \geq \dfn(v)$. In particular, when $v=r$ the algorithm CACTUS calculates the WDP and the WDN
   of the given cactus and we have
   \begin{eqnarray}
      \gamma_{w}(K)=\gamma_{w}(r)\,.
   \end{eqnarray}
\end{proof}
\medskip

Below we  show that algorithm CACTUS needs less than $12n+5b$ additions and $9n+2b$ $\min$ operations and thus prove Theorem 1.

\begin{proof}  {\bf (of Theorem 1.)}
   Let $B_{1},\ldots,B_{b}$ be blocks in the cactus and denote by $n_{j}$ the number of vertices in the block $B_{j}$ for each $j=1,\ldots,b$. Since a hinge can be the root of more than one block, the number of hinges is less or equal $b$. Therefore, we have the inequality
   \begin{equation}
      n_{1} + n_{2} + \ldots + n_{b} -b \leq n \,.
      \label{ineq:vertices and hinges}
   \end{equation}
   Since the algorithm requires much more  time for a cycle block (in comparison with a graft),
we can estiamate
 that for each block $B_{j}$ we need less than   $12(n_{j}-1)$ additions and $9(n_{j}-1)$ $\min$-operations.
Furthermore, 5 additions and 2 $\min$-operations are needed for sticking blocks in a hinge.
Summing up all operations for all blocks in the cactus, we get
   \begin{equation}
      \sum_{j=1}^{b} 12(n_{j}-1) + 5b = 12(\sum_{j=1}^{b} n_{j} - b ) + 5b \leq 12n + 5b \nonumber
   \end{equation}
   and
   \begin{equation}
      \sum_{j=1}^{b} 9(n_{j}-1) + 2b \leq 9n+2b \,.   \nonumber
   \end{equation}
\end{proof}

\begin{rem}
In the proof above, we have assumed that the DFS data structure of the cactus is given.
The reason is that the algorithm for $k$-trees \cite{PAPkT} assumes the structural information of a partial $k$-tree is given.
It is however well-known that the DFS algorithm is linear in the number of edges of a graph, which for trees and cactus graphs implies that it is also linear in the number of vertices.
More precisely,  $4m$ operations are needed when traversing the graph during  DFS
that provides the DFS cactus data structure: the DFS search has to be followed by a traversal in the opposite DFS order and, in addition. each cycle has to be traversed two more times to
assign the roots and  the successors to all vertices of a cycle.
\end{rem}

\noindent
{\bf Acknowledgement} \\

This work was supported in part by Slovenian Research Agency ARRS (Grant number P1-0285-0101).

\bigskip

\bibliographystyle{plainnat}


\end{document}